\newcommand{\mycite}[1]{$^{\mbox{\rm\scriptsize\cite{#1}}}\!$}
\begin{document}
\newtheorem{definition}{\bf Definition}
\newtheorem{theorem}{\bf Theorem}
\newtheorem{proposition}{\bf Proposition}
\newtheorem{lemma}{\bf Lemma}
\newtheorem{corollary}{\bf Corollary}
\newtheorem{example}{\bf Example}
\newtheorem{remark}{\bf Remark}
\newtheorem{Table}{\bf Table}
\newtheorem{Sentence}{\bf Step}
\newtheorem{Branch}{}

\def\T {\ensuremath{\bf{T}}}
\def\N {\ensuremath{\mathbb{N}}}
\def\A {\ensuremath{\bf{A}}}
\def\B {\ensuremath{\bf{B}}}
\def\P {\ensuremath{\bf{P}}}
\def\S {\ensuremath{\mathbb{S}}}
\def\E {\ensuremath{\bf{E}}}
\def\H {\ensuremath{\bf{H}}}
\def\V {\ensuremath{\rm{V}}}
\def\D {\ensuremath{\rm{D}}}
\def\PF {\ensuremath{\bf {PF}}}
\def\TH {\ensuremath{\bf{TH}}}
\def\RS {\ensuremath{\mathbb{T}}}
\def\HCTD {\ensuremath{\tt{HCTD}}}
\def\HPCTD {\ensuremath{\mathrm{HPCTD}}}
\def\CTD {\ensuremath{\mathrm{CTD}}}
\def\PCTD {\ensuremath{\mathrm{PCTD}}}
\def\WUCTD {\ensuremath{\mathrm{WUCTD}}}
\def\RSD {\ensuremath{\mathrm{RSD}}}
\def\FWCTD {\ensuremath{\mathrm{FWCTD}}}
\def\SWCTD {\ensuremath{\mathrm{SWCTD}}}
\def\ARSD {\ensuremath{\tt{RSD}}}
\def\APRSD {\ensuremath{\tt{WRSD}}}
\def\CCTD {\ensuremath{\tt{CTD}}}
\def\ASWCTD {\ensuremath{\tt{SWCTD}}}
\def\ASMPD {\ensuremath{\tt{SMPD}}}
\def\AHPCTD {\ensuremath{\tt{HPCTD}}}
\def\TDU {\ensuremath{\mathrm{TDU}}}
\def\ATDU {\ensuremath{\tt{TDU}}}
\def\RDU {\ensuremath{\mathrm{RDUForZD}}}
\def\ARDU {\ensuremath{\tt{RDUForZD}}}
\newtheorem{Rules}{\bf Rule}

\newcommand{\disc}[1]{\mbox{{\rm disc}$(#1)$}}
\newcommand{\alg}[1]{\mbox{{\rm alg}$(#1)$}}
\newcommand{\SAT}[1]{\mbox{{\rm sat}$(#1)$}}
\newcommand{\SQR}[1]{\mbox{{\rm sqrt}$(#1)$}}
\newcommand{\ideal}[1]{\langle#1\rangle}
\newcommand{\I}[1]{\mbox{{\rm I}$_{#1}$}}
\newcommand{\ldeg}[1]{\mbox{{\rm ldeg}$(#1)$}}
\newcommand{\iter}[1]{\mbox{{\rm iter}$(#1)$}}
\newcommand{\mdeg}[1]{\mbox{{\rm mdeg}$(#1)$}}
\newcommand{\lv}[1]{\mbox{{\rm lv}$_{#1}$}}
\newcommand{\mvar}[1]{\mbox{{\rm mvar}$(#1)$}}
\newcommand{\prem}[1]{\mbox{{\rm prem}$(#1)$}}
\newcommand{\pquo}[1]{\mbox{{\rm pquo}$(#1)$}}
\newcommand{\rank}[1]{\mbox{{\rm rank}$(#1)$}}
\newcommand{\res}[1]{\mbox{{\rm res}$(#1)$}}
\newcommand{\cls}[1]{\mbox{{\rm cls}$_{#1}$}}
\newcommand{\sat}[1]{\mbox{{\rm sat}$(#1)$}}
\newcommand{\sep}[1]{\mbox{{\rm sep}$(#1)$}}
\newcommand{\tail}[1]{\mbox{{\rm tail}$(#1)$}}
\newcommand{\zm}[1]{\mbox{{\rm MZero}$(#1)$}}
\newcommand{\zero}[1]{\mbox{{\rm Zero}$(#1)$}}
\newcommand{\rd}[1]{\mbox{{\rm Red}$(#1)$}}
\newcommand{\map}[1]{\mbox{{\rm map}$(#1)$}}
\newcommand{\op}[1]{\mbox{{\rm op}$(#1)$}}

\title{ {Generic Regular Decompositions for Generic Zero-Dimensional Systems}}
\author{Xiaoxian Tang\thanks{Corresponding author}\hspace{1.0em}Zhenghong Chen\hspace{1.0em}Bican Xia\\
         {\small LMAM \& School of Mathematical Sciences}\\
         {\small Peking University,  Beijing 100871,  China}\\
         {\small tangxiaoxian@pku.edu.cn, \ \ chenzhenghong@pku.edu.cn, \ \ xbc@math.pku.edu.cn}}
\date{}
\maketitle
\begin{abstract}
  Two new concepts,   generic regular decomposition and regular-decomposition-unstable (RDU) variety for generic zero-dimensional  systems,  are introduced in this paper and an algorithm is proposed for computing a generic regular decomposition and the associated RDU variety of a given generic zero-dimensional system simultaneously. The solutions of the given system can be expressed by finitely many zero-dimensional regular chains if the parameter value is not on the RDU variety.
 The so called weakly relatively simplicial decomposition plays a crucial role in the algorithm,  which is based on the theories of subresultants. Furthermore,  the algorithm can be naturally adopted to compute a non-redundant Wu's decomposition and the decomposition is stable at any parameter value that is not on the RDU variety. The algorithm has been implemented with Maple 16 and experimented with a number of benchmarks from the literature. Empirical results are also presented to show the good performance of the algorithm.

~\\
{\bf Keywords: }
generic zero-dimensional system, regular-decomposition-unstable variety, parametric triangular decomposition,  generic regular decomposition
\end{abstract}

\section{Introduction}\label{SecIntro}
Solving parametric polynomial systems is usually a key problem in many research and applied areas,  such as automated geometry theorem deduction,  stability analysis of dynamical systems,  robotics and so on \cite{changbo, Montes, CGS}. By ``solving",  we often mean to determine (1) for what parameter values the polynomial system has solutions,  and (2) whether the solutions can be expressed by some simple representations.

Generally speaking,  there are two kinds of methods for solving the above questions (1) and (2),  {\it i.e.},  the methods based on {\it Gr\"obner bases}  \cite{sun, Montes, KN, CGS} and {\it triangular decompositions} \cite{marco, changbo, gxs1992, kalk, maza, dkwang, wangi, wu, yhx01, xia, zjzi}.

For parametric systems, the concepts of comprehensive Gr\"obner system (CGS) and comprehensive Gr\"obner bases (CGB) introduced by Weispfenning in \cite{CGS} and the algorithms for computing them  \cite{sun,Montes,KN,SS,newSS,CGS} are powerful tools for answering questions (1) and (2). The first CGB algorithm introduced in\cite{CGS} suffers from the problem of too many redundant branches. Many improved algorithms have been proposed since then \cite{sun,Montes,KN,SS,newSS}, among which, the one proposed by Suzuki and Sato \cite{SS} was accepted widely by subsequent researchers. The latest progress on this subject was reported by Kapur {\it et al.} \cite{sun}. They solved the famous P3P problem\cite{gxs} by computing CGS and provided empirical data illustrating that the CGS method could solve practical problems in amazingly short time.

The methods based on triangular decompositions have been studied by many researchers since Wu's work  \cite{wu}.  A significant concept in the theories of triangular sets is ``regular chain" (or ``normal chain") introduced by Kalkbrener \cite{kalk} and Yang and Zhang \cite{zjzi}  independently. Gao and Chou proposed a method in \cite{gxs1992}  for identifying all parametric values for which a given system has solutions and giving the solutions by $p-$chains\footnotemark \footnotetext{The concept $p-$chain is stronger than regular chain,  see more details in \cite{gxs1992}.} without a partition of the parameter space.
Wang generalized the concept of regular chain to regular system and gave an efficient algorithm for computing it \cite{wangi,  wang, wangEpsilon}. It should be noticed that,  due to their strong projection property,  the regular systems or series computed by {\tt RegSer}\footnotemark\footnotetext{http://www-calfor.lip6.fr/\~{}wang/epsilon/} may also be used as representations for parametric systems.
The concept of comprehensive triangular decomposition (CTD) introduced by Chen {\it et al.} in \cite{changbo}  can answer questions (1) and (2). Algorithms for computing regular chain decompositions and CTDs have been implemented as central functions of {\tt RegularChains} library in Maple 16.

For a given parametric system $\P$ with $n$ variables and $d$ parameters, many existing algorithms for computing regular decomposition over a certain field $K$ give a regular zero-decomposition of $\P$ in ${\overline K}^{n+d}$. Then, if one wants to answer questions (1) and (2), one may try computing projections from the solution space  to the parametric space. On the other hand, there are some other methods, such as Wu's method \cite{wu} and relatively simplicial decomposition (RSD) \cite{zjzi}, which consider parameters as ``constants" during the process of decomposition and can obtain zero-decompositions of $\P$ in $\overline{K(U)}^{n}$ where $U$ stands for the $d$ parameters. In this paper, we follow the idea of the latter methods and propose an algorithm for computing a so-called {\em generic regular decomposition} ${\mathbb T}$
of a generic zero-dimensional system $\P$ in $\overline{K(U)}^{n}$ (see Definition \ref{DEsus}). 
At the same time, the algorithm also obtains a parametric polynomial such that the regular decomposition is {\em stable} at any parametric point outside the variety generated by the parametric polynomial and we call the variety {\em regular-decomposition-unstable} (RDU) variety. Roughly speaking, ``stable at a parametric point" means that the regular decomposition will remain after we substitute the point for the parameters in $\P$ and ${\mathbb T}$ (see Definition \ref{DEsus}). As a result,  questions (1) and (2) for generic zero-dimensional systems are answered except for the case where parameters are on the RDU variety. That is why the decomposition is called {\em generic} regular chain decomposition.

The proposed algorithm is based on {\em weakly relatively simplicial decomposition},  a new concept that is weaker than {\em relatively simplicial decomposition}  proposed by Yang {\it et al.} in \cite{zjzi}  and inspired by the method for computing regular systems introduced by Wang in \cite{wangi, wang}.  In addition,  the proposed algorithm can be naturally adopted to compute a non-redundant Wu's decomposition for a given generic zero-dimensional system. Furthermore, computing RDU varieties can be regarded as the first step of computing {\em border polynomial} (BP), which is a crucial concept introduced by Yang {\it et al.} \cite{yhx01,  yang,  xia} for solving the real root classification (RRC) problem of parametric semi-algebraic systems.  As a matter of fact, an RDU variety of a generic zero-dimensional system {\rm with respect to} ({\it w.r.t.}) a generic regular decomposition  is a subvariety of the hypersurface generated by a certain BP.
The new algorithm has been implemented on the basis of DISCOVERER \cite{discover}  with Maple 16 and experimented with a number of benchmarks from \cite{changbo,  zxq, sun,  Montes,  KN}.  Empirical results are also presented to show the good performance of the algorithm.

The paper is organized as follows. Section \ref{sfuhaoshuoming} gives basic definitions and concepts that are needed to understand the main algorithm. Section \ref{sectionus} contains the main algorithm,  namely Algorithm \ref{ALsus},  and some relative subalgorithms,  especially the subalgorithm for computing weakly relatively simplicial decompositions. Besides,  proofs for these algorithms are presented in this section and several illustrative examples are given. The empirical data and comparison with previous work along with several implementation details are presented in Section \ref{sectionexamples}.  Section 5 concludes the paper with a discussion on our future work along this direction.
\section{Preliminaries}\label{sfuhaoshuoming}

All concepts in this section without precise definitions can be found in \cite{cox, wu, xia}.
$\mathbb R$ and $\mathbb C$ stand for the field of real numbers and the field of complex numbers,  respectively.

         Suppose $\{u_1,  \ldots,  u_d,  x_1,  \ldots,  x_n\}$ is a set of indeterminates with a given order $u_1\prec. . . \prec u_d\prec x_1\prec. . . \prec x_n$
        where $\{u_1,  \ldots,  u_d\}$ and $\{x_1,  \ldots,  x_n\}$ are the sets of parameters and variables,  respectively.
       Let $U=\{u_1,  \ldots,  u_d\}$ and $X=\{x_1,  \ldots,  x_n\}$.
       Suppose $K$ is a field and
        $\overline {K}$ is its algebraic closure. Let $K[U]$ be the ring of polynomials in $U$ with coefficients in $K$ and $K(U)$ be the rational function field.
       A non-empty finite subset $\P$ of $K[U][X]$ is said to be a {\em system}. If ${\P}\subset K[U][X]\backslash K[X]$,  it is called a {\em parametric system}. If ${\P}\subset K[X]$,  it is called a {\em constant system}.

        For a system $\P\subset$$K[U][X]$ ($\overline{K}[X]$),  $\ideal{{\P}}_{K[U][X]}$ ($\ideal{{\P}}_{\overline{K}[X]}$) denotes the ideal generated by ${\P}$ in $K[U][X]$ ($\overline{K}[X]$).
        For any $F$ in $K[U][X]\backslash \{0\}$ ($\overline{K}[X]\backslash \{0\}$) and for any $x\in X$,  if $x$ appears in $F$,   $F$ can be regarded as a univariate polynomial in $x$,  namely $F=C_0x^m+C_1x^{m-1}+\ldots+C_m$ where $C_0, C_1, \ldots, C_m$ are polynomials in $K[U][X\backslash \{x\}]$ ($\overline{K}[X\backslash \{x\}]$) and $C_0\neq 0$. Then $m$ is the {\em leading degree} of $F$ {\it w.r.t.} $x$ and is denoted by $\deg(F, x)$. Note that if $x$ does not appear in $F$,  $\deg(F, x)=0$. The class of $F$ is the biggest index $k$ such that $\deg(F,x_k)>0$. If $\deg(F, x_i)=0$ for every $i$ ($1\leq i\leq n$),   then the {\em class} of $F$ is $0$. The class of $F$ in $K[U][X]\backslash \{0\}$ ($\overline{K}[X]\backslash \{0\}$) is denoted by $\cls{F}$. If $\cls{F}>0$,  $x_{\cls{F}}$ is the {\em main variable} of $F$ and is denoted by $\mvar{F}$. Assume that $F=C_0x_p^m+C_1x_p^{m-1}+\ldots+C_m$ where $p=\cls{F}>0$ and $C_0\neq 0$,  then $C_0$, denoted by $\I{F}$, is the {\em initial} of $F$ and $x_p^m$, denoted by $\rank{F}$, is the {\em rank} of $F$.

          A non-empty finite set ${\T}=\{T_1, \ldots, T_r\}$ of polynomials in $K[U][X]$ ($\overline{K}[X]$)  is  a {\em triangular set} in $K[U][X]$ ($\overline{K}[X]$) if $0<\cls{T_1}<\cls{T_2}<\ldots<\cls{T_r}$. For a triangular set $\T$ in $K[U][X]$ ($\overline{K}[X]$),  $\I{{\T}}$,  $\mvar{{\T}}$  and $\rank{{\T}}$ denote $\Pi_{T\in {\T}}\I{T}$,  $\{\mvar{T}|T\in {\T}\}$ and $\{\rank{T}|T\in {\T}\}$, respectively. The {\em saturated ideal} of a triangular set $\T$ in $K[U][X]$ is defined as the set $\{F\in K[U][X]|\I{\T}^sF\in \ideal{{\T}}_{K[U][X]}$ for some positive integer $s\}$ and is denoted by $\SAT{\T}_{K[U][X]}$. Similarly,  the {\em saturated ideal} of a triangular set $\T$ in  $\overline{K}[X]$ is defined as the set $\{F\in \overline{K}[X]|\I{\T}^sF\in \ideal{{\T}}_{\overline{K}[X]}$ for some positive integer $s\}$ and is denoted by $\SAT{\T}_{\overline{K}[X]}$.
 Suppose $F\in K[U][X]$ ($\overline{K}[X]$) and $\T$ is a triangular set in $K[U][X]$  ($\overline{K}[X]$),  then $F$ is {\em reduced} {\it w.r.t.} $\T$ if $\deg(F, \mvar{T_i})<\deg(T_i, \mvar{T_i})$ for every $i$ $(1\leq i\leq r)$. A triangular set ${\T}=\{T_1, \ldots, T_r\}$ in $K[U][X]$ ($\overline{K}[X]$) is a {\em non-contradictory ascending chain} in $K[U][X]$ ($\overline{K}[X]$) if $T_i$ is reduced {\it w.r.t.} $\{T_1, \ldots, T_{i-1}\}$ for every $i$ $(2\leq i\leq r)$. A single-element set $\{F\}\subset K[U]$ ($\{F\}\subset \overline{K}$) is  a {\em contradictory ascending chain} in $K[U][X]$ ($\overline{K}[X]$) if $F\neq 0$. An {\em ascending chain} is either a non-contradictory ascending chain or a contradictory ascending chain.

          For two polynomials $F$ and $P$ in $K[U][X]$ ($\overline{K}[X]$) and a variable $x\in X$,  the {\em pseudo remainder} and the {\em pseudo quotient} of $F$ {\em pseudo-divided} by $P$ {\it w.r.t.} $x$ are
        denoted by $\prem{F,  P,  x}$ and $\pquo{F, P, x}$, respectively. Particularly, $\prem{F, P, \mvar{P}}$ is denoted by $\prem{F, P}$.
        For a polynomial $F\in K[U][X]$ ($\overline{K}[X]$) and a triangular set ${\T}=\{T_1,  . . . ,  T_r\}$ in $K[U][X]$ ($\overline{K}[X]$),
        the {\em successive pseudo remainder} \cite{zjzi} of $F$ {\it w.r.t.} ${\T}$
        is denoted by $\prem{F,  {\T}}$,  namely
$\prem{F,  {\T}}=\prem{\ldots\prem{\prem{F,T_r},T_{r-1}},\ldots,T_1}$.
        For a finite set ${\P}\subset K[U][X]$ ($\overline{K}[X]$),  $\prem{{\P},  {\T}}$
        denotes the set $\{\prem{F,  {\T}}\mid F\in {\P}\}$.

For ${\bf P}\subset K[U][X]$,
       the set
$\{(a_1,  \ldots,  a_n)\in \overline{K(U)}^{n}|P(U, a_1,  \ldots,  a_n)=0,  \forall P\in \P\}$ is denoted by ${\rm V}_{\overline{K(U)}}({\bf P})$.
        An ascending chain ${\bf C}$ in $K[U][X]$  is a {\em characteristic set} of $\P$ in $K[U][X]$ if ${\bf C}\subset \ideal{{\P}}_{K[U][X]}$ and $\prem{{\P}, {\bf C}}=\{0\}$.  Theorem \ref{wellorder} below is the so-called {\em well-ordering principle}.

\begin{theorem}\label{wellorder}\mycite{wu}
There exists an algorithm which, for an input non-empty finite subset ${\P}\subset K[U][X]$, outputs either a contradictory ascending chain meaning that ${\V}_{\overline{K(U)}}({\P})=\emptyset$, or a (non-contradictory) characteristic set ${\bf C}=\{C_1,\ldots,C_t\}$ such that
${\V}_{\overline{K(U)}}({\P})={\V}_{\overline{K(U)}}({\bf C}\backslash\I{{\bf C}})\cup\cup_{i=1}^t{\V}_{\overline{K(U)}}({\P}\cup {\bf C}\cup \{\I{C_i}\})$.
\end{theorem}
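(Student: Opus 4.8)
The plan is to recognize this as Wu's classical zero-decomposition (well-ordering) theorem, applied with the rational function field $K(U)$ as the coefficient field, and to prove it in two stages: (a) exhibit the algorithm---in essence Ritt--Wu's procedure for computing a characteristic set---and show that it terminates; (b) derive the displayed identity from the two defining properties of a characteristic set together with the pseudo-division formula.

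For stage (a), I would first recall Ritt's ordering on ascending chains: two chains are compared rank by rank, one rank being smaller than another if it has lower class, or the same class with smaller leading degree, and a shorter chain that agrees with a longer one up to its own length is declared smaller; the point is that this ordering is well-founded, i.e.\ it admits no infinite strictly descending sequence. The algorithm is the standard loop on the current finite set $\P'$, initialized to $\P$. If $\P'$ contains a nonzero element of $K[U]$, output it as a contradictory ascending chain. Otherwise choose a basic set ${\bf B}$ of $\P'$, i.e.\ an ascending chain contained in $\P'$ of least possible rank, and compute the successive pseudo-remainders $\prem{P,{\bf B}}$ for all $P\in\P'$. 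If all of them are $0$, then ${\bf B}\subset\ideal{\P'}_{K[U][X]}$ and $\prem{\P',{\bf B}}=\{0\}$, so ${\bf B}$ is a non-contradictory characteristic set of $\P'$ and we output it; otherwise replace $\P'$ by $\P'\cup\{\prem{P,{\bf B}}\mid P\in\P',\ \prem{P,{\bf B}}\neq0\}$ and repeat. Two invariants are maintained: $\P\subset\P'$, and every $B\in{\bf B}$ as well as every nonzero $\prem{P,{\bf B}}$ lies in $\ideal{\P}_{K[U][X]}$---the basic set is drawn from $\P'$, and the polynomial pseudo-division identities give a relation $\prod_i\I{B_i}^{e_i}\cdot P=\sum_iQ_iB_i+\prem{P,{\bf B}}$ inside $K[U][X]$---so the output chain always lies in $\ideal{\P}_{K[U][X]}$ and, when non-contradictory, is a characteristic set of $\P$ as well as of $\P'$.

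The main obstacle is termination, which rests on the key lemma that a nonzero successive pseudo-remainder $\prem{P,{\bf B}}$ is reduced with respect to ${\bf B}$, so that any basic set of the new set has strictly smaller rank than that of ${\bf B}$. Since Ritt's ordering is well-founded, the strictly descending sequence of basic-set ranks produced by successive iterations is finite, so the loop halts. Formalizing ``basic set'', checking the strict rank drop (a short case analysis on the class of $\prem{P,{\bf B}}$), and invoking well-foundedness is where the genuine work lies; everything else is bookkeeping.

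For stage (b), suppose the output is a non-contradictory characteristic set ${\bf C}=\{C_1,\dots,C_t\}$, so ${\bf C}\subset\ideal{\P}_{K[U][X]}$ and $\prem{\P,{\bf C}}=\{0\}$; put $J=\I{{\bf C}}=\prod_i\I{C_i}$. From $\prem{P,{\bf C}}=0$ and the successive pseudo-division identities one obtains, for each $P\in\P$, an exponent $s$ and polynomials $Q_i$ with $J^sP=\sum_iQ_iC_i$; hence any $a\in\overline{K(U)}^{n}$ with all $C_i(a)=0$ and $J(a)\neq0$ satisfies $P(a)=0$, which gives $\V_{\overline{K(U)}}({\bf C}\backslash\I{{\bf C}})\subseteq\V_{\overline{K(U)}}(\P)$. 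Conversely, because ${\bf C}\subset\ideal{\P}_{K[U][X]}$, every $a\in\V_{\overline{K(U)}}(\P)$ satisfies $C_i(a)=0$ for all $i$; if $J(a)\neq0$ then $a\in\V_{\overline{K(U)}}({\bf C}\backslash\I{{\bf C}})$, while if $J(a)=0$ then $\I{C_i}(a)=0$ for some $i$ and $a\in\V_{\overline{K(U)}}(\P\cup{\bf C}\cup\{\I{C_i}\})$; and each $\V_{\overline{K(U)}}(\P\cup{\bf C}\cup\{\I{C_i}\})\subseteq\V_{\overline{K(U)}}(\P)$ trivially. Combining these inclusions gives the claimed equality. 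Finally, if the output is a contradictory chain $\{F\}$ with $F\in K[U]\setminus\{0\}$, then $F\in\ideal{\P}_{K[U][X]}$ and $F$ is a nonzero element of the field $K(U)$, so $\V_{\overline{K(U)}}(\P)\subseteq\V_{\overline{K(U)}}(\{F\})=\emptyset$, as claimed.
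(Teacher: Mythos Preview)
The paper does not supply its own proof of this statement; it is quoted as Wu's well-ordering principle with a citation to \cite{wu}, so there is nothing to compare against beyond the classical argument. Your proposal correctly reconstructs that classical Ritt--Wu proof: the basic-set iteration with termination via well-foundedness of the rank ordering, and the zero-decomposition identity from the pseudo-division formula together with ${\bf C}\subset\ideal{\P}$ and $\prem{\P,{\bf C}}=\{0\}$. The one place where the exposition could be tightened is the loop invariant: it is cleanest to state and maintain $\ideal{\P'}=\ideal{\P}$ (equivalently $\P\subset\P'\subset\ideal{\P}$) directly, rather than phrasing it in terms of the current basic set and remainders, but the substance of your argument is sound.
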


On the base of Theorem \ref{wellorder}, there exists an algorithm, namely Wu's method,  for computing a  finite sequence of ascending chains ${\bf C}_1, {\bf C}_2, \ldots, {\bf C}_m$ $(m\geq 1)$ in $K[U][X]$ such that ${\bf C}_1, {\bf C}_2, \ldots, {\bf C}_m$ is a finite sequence of  characteristic sets  in $K[U][X]$ and
if $m=1$,  ${\V}_{\overline{K(U)}}({\P})=\emptyset$; otherwise,  suppose ${\mathbb S}=\{{\bf C}_i|1\leq i\leq m$ and ${\bf C}_i$ is a non-contradictory ascending chain$\}$. Then  ${\V}_{\overline{K(U)}}({\P})=\cup_{{\bf C}\in {\mathbb S}}{\V}_{\overline{K(U)}}({\bf C}\backslash\I{{\bf C}})$.

 The set of ascending chains $\{{\bf C}_1, {\bf C}_2, \ldots, {\bf C}_m\}$ above is said to be a {\em Wu's decomposition} or {\em characteristic set decomposition} of $\P$ in $K[U][X]$.
In addition, $\P$ is said to be a {\em generic zero-dimensional system} if $\mvar{{\bf C}_i}=X$ for every non-contradictory ascending chain ${\bf C}_i$.  Remark that a Wu's decomposition may suffer from the redundant branches problem. That means,  ${\V}_{\overline{K(U)}}({\bf C}_i\backslash\I{{\bf C}_i})$ can be an empty set for some non-contradictory ascending chain ${\bf C}_i$ $(1\leq i\leq m)$.

 Another important concept in the theories of triangular decompositions is regular chain.
        For two polynomials $F$ and $P$ in $K[U][X]$ ($\overline{K}[X]$) and a variable $x\in X$,  the {\em resultant} [15] of $F$ and $P$ {\it w.r.t.} $x$ is
        denoted by $\res{F,  P,  x}$. Particularly,  $\res{F, P, \mvar{P}}$ is denoted by $\res{F, P}$.
        For a polynomial $F\in K[U][X]$ ($\overline{K}[X]$) and a triangular set ${\T}=\{T_1,  . . . ,  T_r\}$ in $K[U][X]$ ($\overline{K}[X]$),
        the {\em successive resultant} \cite{zjzi} of $F$ {\it w.r.t.} ${\T}$
        is denoted by $\res{F,  {\T}}$,  namely
$\res{F,  {\T}}=\res{\ldots\res{\res{F,T_r},T_{r-1}},\ldots,T_1}$.
       A triangular set ${\bf T}=\{T_1,  \ldots,  T_r\}$ in
       $K[U][X]$ ($\overline{K}[X]$) is said to be a {\em regular chain} in $K[U][X]$ ($\overline{K}[X]$),   if $\I{{T_1}}\neq 0$ and for each $i$ $(1<i\leq r)$,  $\res{\I{T_i},  \{T_{i-1},  \ldots,  T_1\}}\neq 0$.
       If ${\bf T}$ is a regular chain in $K[U][X]$ ($\overline{K}[X]$) and $\mvar{{\T}}=X$,  $\T$ is a {\em zero-dimensional regular chain}.
       Regular chains have a series of good properties,  some of which are listed below.  For ${\bf P}\subset \overline{K}[X]$,
       ${\V}({\bf P})$ denotes the set
$\{(a_1,  \ldots,  a_n)\in \overline{K}^{n}|P(a_1,  \ldots,  a_n)=0,  \forall P\in \P\}$.

\begin{proposition}\label{PRRCP1}\mycite{marco, changbo, kalk, wangi, wang, zjzi, zjziii}
If $\T$ is a regular chain in $K[U][X]$ $(\overline{K}[X])$,  then ${\V}_{\overline{K(U)}}(\T\backslash\I{\T})\neq \emptyset$ $({\V}(\T\backslash \I{\T})\neq \emptyset)$.
\end{proposition}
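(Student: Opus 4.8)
The two cases are identical — a regular chain in $K[U][X]$ is in particular a regular chain over the algebraic closure $\overline{K(U)}$ — so write $k$ for the algebraically closed field in play and abbreviate the quasi-component $\V(\T\backslash\I{\T})$ by $W(\T):=\{a\in k^{n}:T_i(a)=0,\ \I{T_i}(a)\neq 0\ \text{for }1\le i\le r\}$. The plan is to prove, by induction on $r=|\T|$, the slightly stronger assertion: \emph{if $\T=\{T_1,\ldots,T_r\}$ is a regular chain in $k[X]$ and $P\in k[X]$ satisfies $\res{P,\T}\neq 0$, then $P$ does not vanish identically on $W(\T)$}. The proposition is the instance $P=1$, since $\res{1,\T}=1$. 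Strengthening the statement in this way is what makes the induction work: peeling off the top polynomial $T_r$ reproduces the regular-chain hypothesis on $T_r$ as an input to the inductive step.

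The base case $r=0$ is immediate, as $W(\T)=k^{n}$ and a nonzero polynomial over the infinite field $k$ has a non-root. For the inductive step, put $\T'=\{T_1,\ldots,T_{r-1}\}$, $v=\mvar{T_r}$, and $Q=\res{P,T_r}$ (the resultant with respect to $v$). By the definition of the successive resultant, $\res{P,\T}=\res{Q,\T'}$, whence $\res{Q,\T'}\neq 0$; and by the regular-chain condition, $\res{\I{T_r},\T'}\neq 0$. Since the ordinary resultant is multiplicative in each argument — hence so is the successive resultant — we obtain $\res{\I{T_r}\cdot Q,\T'}=\res{\I{T_r},\T'}\cdot\res{Q,\T'}\neq 0$. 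Applying the induction hypothesis to $\T'$ and the polynomial $\I{T_r}\cdot Q$ produces a point $b\in W(\T')$ with $\I{T_r}(b)\neq 0$ and $Q(b)\neq 0$. Specializing $T_r$ at $b$ in all variables other than $v$ gives a univariate polynomial in $v$ with nonzero leading coefficient $\I{T_r}(b)$, hence of degree $\deg(T_r,v)\ge 1$; choose a root $\beta\in k$ and let $a$ be $b$ with its $\cls{T_r}$-th coordinate reset to $\beta$. Because $T_i$ and $\I{T_i}$ do not involve $v$ for $i<r$, the point $a$ still lies in $W(\T')$, and $T_r(a)=0$ with $\I{T_r}(a)=\I{T_r}(b)\neq 0$, so $a\in W(\T)$. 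Finally, $Q(b)\neq 0$ together with $\I{T_r}(b)\neq 0$ forces $T_r(b,v)$ and $P(b,v)$ to be coprime in $k[v]$, so $P(a)=P(b,\beta)\neq 0$.

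The lifting step uses only that $k$ is algebraically closed and infinite, and the multiplicativity of the successive resultant reduces to the classical identity $\res{AB,C,x}=\res{A,C,x}\cdot\res{B,C,x}$; both are routine. The point that I expect to demand the most care is the specialization behaviour of the resultant — that $Q(b)\neq 0$ implies $T_r(b,v)$ and $P(b,v)$ share no common factor — which is only valid when the leading coefficient of $T_r$ in $v$ survives the substitution. Securing $\I{T_r}(b)\neq 0$ beforehand is exactly what guarantees this, and keeping track of the analogous non-degeneracy conditions hidden inside the successive resultant is the main technical obstacle.
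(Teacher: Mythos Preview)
The paper does not supply its own proof of Proposition~\ref{PRRCP1}; it is stated with citations to the literature and used as a known fact. So there is nothing to compare against at the level of argument.

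Your proof is correct and is essentially the classical one. The strengthening of the statement to ``$\res{P,\T}\neq 0$ implies $P$ does not vanish identically on $W(\T)$'' is exactly the right inductive loading, and your use of multiplicativity of the successive resultant to bundle $\I{T_r}$ and $Q$ into a single nonvanishing condition is standard. Two minor remarks. First, the paper's definition of triangular set requires it to be nonempty, so strictly speaking your base case $r=0$ is a convention outside the paper's framework; the induction is easily rebased at $r=1$ with no change in content. Second, the final step can be streamlined: from $Q=\res{P,T_r,v}$ one has $Q=AP+BT_r$ for suitable $A,B$, and since $Q$ does not involve $v$ while $T_r(a)=0$, the equation $Q(b)=A(a)P(a)$ gives $P(a)\neq 0$ immediately, bypassing the coprimality discussion and the delicate specialization behaviour you flag at the end. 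Either way the argument goes through.
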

\begin{proposition}\label{PRRCP2}\mycite{marco, changbo, kalk, wangi, wang, zjzi, zjziii}
If $\T$ is a regular chain in $K[U][X]$ and $P$ is a polynomial in $K[U][X]$,  then

$(1)$$\prem{P, {\T}}=0$ if and only if $P\in \SAT{\T}_{K[U][X]}$;

$(2)$${\V}_{\overline{K(U)}}({\T}\backslash \I{\T})\subset {\V}_{\overline{K(U)}}(P)$ if and only if $P\in \sqrt{\SAT{\T}_{K[U][X]}}$.\\
Furthermore, if $\T$ is zero-dimensional, then

$(3)$${\V}_{\overline{K(U)}}({\T})\cap {\V}_{\overline{K(U)}}(P)\neq \emptyset$ if and only if $\res{P, {\T}}= 0$.
\end{proposition}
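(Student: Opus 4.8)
The plan is to treat the three items separately; of them, part (1) carries the genuinely regular-chain-specific content, while (2) and (3) follow from it together with standard algebraic geometry and resultant theory. For (1), the forward direction needs no hypothesis on $\T$: iterating the pseudo-division identity yields $\I{\T}^{s}P=\sum_{i}Q_{i}T_{i}+\prem{P,\T}$ for a suitable $s$ and $Q_{i}\in K[U][X]$, so $\prem{P,\T}=0$ immediately gives $\I{\T}^{s}P\in\ideal{\T}_{K[U][X]}$, i.e.\ $P\in\SAT{\T}_{K[U][X]}$. For the converse the plan is to prove the key lemma: \emph{a polynomial reduced w.r.t.\ the regular chain $\T$ that lies in $\SAT{\T}_{K[U][X]}$ is $0$}, by induction on $r=|\T|$. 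The case $r=1$ is a degree comparison in $\mvar{T_{1}}$ (using $\I{T_{1}}\neq 0$); in the inductive step one sets $\T'=\{T_{1},\ldots,T_{r-1}\}$ and uses exactly the defining inequality $\res{\I{T_{r}},\T'}\neq 0$, which says $\I{T_{r}}$ is a non-zero-divisor modulo $\SAT{\T'}$, to push the problem down to $\T'$. Granting the lemma: since $\SAT{\T}$ is by construction saturated w.r.t.\ $\I{\T}$, one has $P\in\SAT{\T}\iff\I{\T}^{s}P\in\SAT{\T}\iff\prem{P,\T}\in\SAT{\T}$, and because $\prem{P,\T}$ is reduced the lemma forces $\prem{P,\T}=0$.

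For (2), set $W={\V}_{\overline{K(U)}}(\T\backslash\I{\T})$. The direction ``$\Leftarrow$'' is elementary: if $\I{\T}^{s}P^{m}\in\ideal{\T}$, evaluating on $W$, where $\I{\T}\neq 0$, forces $P=0$ on $W$. For ``$\Rightarrow$'' I would work over the algebraically closed field $L=\overline{K(U)}$ and invoke the standard identity ${\V}_{L}(\SAT{\T})=\overline{W}$ (Zariski closure): one containment is immediate, and the other follows by applying the Nullstellensatz to the product $gf$, where $f=\I{\T}$ and $g$ is any polynomial vanishing on $W$. Then $W\subseteq{\V}_{L}(P)$ gives $\overline{W}={\V}_{L}(\SAT{\T})\subseteq{\V}_{L}(P)$, so Hilbert's Nullstellensatz over $L$ yields $P\in\sqrt{\SAT{\T}_{L}}$, which a flat-base-change check identifies with $P\in\sqrt{\SAT{\T}_{K[U][X]}}$.

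For (3), let $\T=\{T_{1},\ldots,T_{r}\}$ be a zero-dimensional regular chain, so $\mvar{T_{i}}=x_{i}$; I would argue by induction on $r$, establishing at the same time that $\I{T_{i}}$ vanishes at no point of ${\V}_{\overline{K(U)}}(\{T_{1},\ldots,T_{i-1}\})$ --- which is just $\res{\I{T_{i}},\{T_{1},\ldots,T_{i-1}\}}\neq 0$ read through the claim for the shorter chain. Put $Q=\res{P,T_{r}}$ (w.r.t.\ $x_{r}=\mvar{T_{r}}$), so $\res{P,\T}=\res{Q,\{T_{1},\ldots,T_{r-1}\}}$ and $\{T_{1},\ldots,T_{r-1}\}$ is again a zero-dimensional regular chain. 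Since $\I{T_{r}}$ is non-vanishing on ${\V}_{\overline{K(U)}}(\{T_{1},\ldots,T_{r-1}\})$, the specialization (Poisson) formula for resultants shows that for each such point $a$, $Q(a)=0$ iff $P(a,\beta)=0$ for some root $\beta$ of $T_{r}(a,x_{r})$, i.e.\ iff some $(a,\beta)$ lies in ${\V}_{\overline{K(U)}}(\T)\cap{\V}_{\overline{K(U)}}(P)$; combined with the induction hypothesis applied to $\{T_{1},\ldots,T_{r-1}\}$ and $Q$, this gives the claim, the base case $r=1$ being the classical fact that two univariate polynomials over $\overline{K(U)}$ have vanishing resultant iff they share a root.

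The crux is the key lemma in (1) --- that a nonzero polynomial reduced with respect to a regular chain cannot lie in the saturated ideal. This is the one place where the full defining condition $\res{\I{T_{i}},\{T_{1},\ldots,T_{i-1}\}}\neq 0$ is essential, and the induction must carefully track how saturation interacts with the non-zero-divisor property of each successive initial as one descends the chain; once this is in hand, (2) reduces to a Nullstellensatz argument and (3) to a resultant-specialization induction. The only other mildly delicate point is the base-change identification of $\sqrt{\SAT{\T}_{K[U][X]}}$ with its extension to $\overline{K(U)}[X]$ used in (2), which is harmless in the characteristic-zero setting relevant here.
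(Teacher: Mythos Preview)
The paper does not supply its own proof of this proposition: it is quoted as a known result with citations to \cite{marco, changbo, kalk, wangi, wang, zjzi, zjziii} and used without argument, so there is nothing in the paper to compare your attempt against.

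That said, your outline is the standard route taken in those references and is essentially correct. Two small remarks. First, in part~(1) the claim that $\res{\I{T_{r}},\T'}\neq 0$ \emph{means} $\I{T_{r}}$ is a non-zero-divisor modulo $\SAT{\T'}$ is itself a lemma, not the definition; a complete write-up would fold its proof into the same induction. Second, in part~(2) the appeal to ``characteristic zero'' is both unnecessary and inconsistent with the paper's setting over an arbitrary field $K$; the clean way to descend from $\overline{K(U)}[X]$ to $K[U][X]$ is to reuse part~(1): if $P^{m}\in\SAT{\T}_{\overline{K(U)}[X]}$ then $\prem{P^{m},\T}=0$, and since this pseudo-remainder is computed entirely inside $K[U][X]$, part~(1) over $K[U][X]$ gives $P^{m}\in\SAT{\T}_{K[U][X]}$ directly, with no base-change argument needed.
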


\begin{remark}\label{REPRRCP2}\mycite{marco, changbo, kalk, wangi, wang, zjzi, zjziii}
This remark is an analogue of Proposition \ref{PRRCP2}. If $\T$ is a regular chain in $\overline{K}[X]$ and $P$ is a polynomial in $\overline{K}[X]$,  then

$(1)$$\prem{P, {\T}}=0$ if and only if $P\in \SAT{\T}_{\overline{K}[X]}$;

$(2)$${\V}({\T}\backslash \I{\T})\subset {\V}(P)$ if and only if $P\in \sqrt{\SAT{\T}_{\overline{K}[X]}}$.\\
Furthermore, if $\T$ is zero-dimensional, then

$(3)$${\V}({\T})\cap {\V}(P)\neq \emptyset$ if and only if $\res{P, {\T}}= 0$.
\end{remark}

\begin{remark}\label{REidea}
     There exist various efficient algorithms for computing regular chain decompositions \cite{marco, changbo, kalk, wangi, wang, zjzi, zjziii}. Regular chain decompositions do not suffer from redundant problem as Wu's decompositions owing to  Proposition \ref{PRRCP1}.  It should be noted that the definition of triangular set and thus that of regular chain in $K[U][X]$ introduced above is not exactly the same as that introduced in \cite{changboPHD, changbo, wangi} when dealing with parametric systems. For example,  consider a parameter system $\{u, x_1, x_2\}$ in $\mathbb{R}[u][x_1, x_2]$. The system itself is a regular chain in $\mathbb{R}[u][x_1, x_2]=\mathbb{R}[u, x_1, x_2]$ according to the definition of regular chain introduced in \cite{changboPHD, changbo, gxs1992, wangi}.  But $\{u, x_1, x_2\}$ is not a regular chain in $\mathbb{R}[u][x_1, x_2]$ in this paper.
\end{remark}

\begin{definition}\label{revise3}
      Suppose $\P$ is a generic zero-dimensional system in $K[U][X]$. A finite set  $\mathbb{T}$  of triangular sets in $K[U][X]$ is said to be a {\em parametric triangular decomposition} of ${\bf P}$ in $K[U][X]$  if ${\rm V}_{\overline{K(U)}}({\bf P})=\cup _{{\T}\in {\mathbb T}}{\rm V}_{\overline{K(U)}}({\bf T}\backslash \I{{\T}})$. If $\mathbb{T}=\emptyset$ or ${\V}_{\overline{K(U)}}({\bf T}\backslash\I{{\bf T}})\neq \emptyset$ for any $\T\in \mathbb{T}$,   the parametric triangular decomposition is said to be {\em non-redundant}. If ${\mathbb T}$ is a finite set of regular chains in $K[U][X]$,  the parametric triangular decomposition is said to be a {\em parametric regular decomposition}.
\end{definition}

       For each $a=(a_1,  \ldots,  a_d)\in {\overline{K}}^{d}$,   $\phi
        _{a}:K[U][X]\longrightarrow \overline{K}[X]$ is a homomorphism such  that $\phi_{a}(F)=F(a,  X)$ for all
        $F\in K[U][X]$ and we denote $\phi_{a}(F)$ by $F(a)$. For a non-empty finite set ${\P}\subset K[U][X]$,  ${\P}(a)$ denotes the set $\{F(a)|F\in{\P}\}$ and   ${\P}(a)=\emptyset$ if ${\P}=\emptyset$.

\begin{definition}\label{DEfus}
      Let $\mathbb{T}$ be a parametric triangular decomposition of a given generic zero-dimensional system $\P$ in $K[U][X]$.  $\mathbb{T}$ is said to be {\em stable} at $a\in \overline{K}^d$  if $\V({\P}(a))=\cup_{{\T}\in \mathbb{T}}\V({\T}(a)\backslash \I{{\T}(a)})$ and $\rank{{\T}}=\rank{{\T}(a)}$ for any ${\T}\in \mathbb{T}$.
\end{definition}

\begin{definition}\label{DeRSSwell}\mycite{changbo}
     Let ${\bf T}$ be a regular chain in $K[U][X]$ and $a\in\overline{K}^{d}$.   If ${\bf
     T}(a)$ is a regular chain in $\overline{K}[X]$ and
     $\rank{{\bf T}(a)}$$=\rank{{\bf T}}$,
     then we say that the regular chain ${\bf T}$ {\em specializes well} at $a$.
\end{definition}

Suppose $\mathcal{V}$ is an {\em affine variety} in $\overline{K}^d$. Then $\dim(\mathcal{V})$ denotes the {\em dimension} of $\mathcal{V}$. Please see the precise definition of dimension of affine variety in \cite{cox}.

\begin{definition}\label{DEsus}
      Let ${\mathbb T}$ be a parametric regular decomposition of a given generic zero-dimensional system $\P$ in $K[U][X]$. Suppose $\mathcal{V}$ is an affine variety in $\overline {K}^{d}$ with $\dim(\mathcal{V})<d$.  If for any $a\in \overline{K}^d\backslash \mathcal{V}$,  ${\V}({\P}(a))=\cup_{{\T}\in \mathbb{T}}{\V}({\T}(a)\backslash \I{{\T}(a)})$ and ${\T}$  specializes well at $a$ for any $\T\in \mathbb{T}$, 
      then ${\mathbb T}$ is said to be a {\em generic regular decomposition} of $\P$  and $\mathcal{V}$ is said to be a {\em regular-decomposition-unstable (RDU) variety}  of $\P$ {\it w.r.t.} ${\mathbb T}$.
\end{definition}

  For any ${\bf P}\subset K[U][X]$,
$\V_{\overline{K}}(\bf P)$ denotes the set
$\{(a_1,  \ldots,  a_{d+n})\in \overline{K}^{d+n}|P(a_1,  \ldots,  a_{d+n})=0,  \forall P\in \P\}$. For any ${\bf B}\subset K[U]$,  ${\rm V}^{U}({\bf B})$ denotes the set
$\{(a_1,  \ldots,  a_d)\in \overline{K}^{d}|B(a_1,  \ldots,  a_d)=0,  \forall B\in \B\}.$
        For any $F\in K[U][X]$,
  the coefficients $B_1,  \ldots,  B_t$ of $F$  in $X$ are polynomials in $K[U]$.  Then
  ${\rm V}^{U}(F)$ denotes ${\rm V}^{U}(\{B_1,  \ldots,  B_t\})$.
   Note that for two finite subsets ${\bf P}$ and ${\bf H}$ of $K[U][X]$,
        ${\rm V}_{\overline{K(U)}}({\bf P}\backslash {\bf H})$ denotes the set
        ${\rm V}_{\overline{K(U)}}({\bf P})\backslash {\rm V}_{\overline{K(U)}}({\bf H})$. Similarly,  we can have $\V({\bf P}\backslash {\bf H})$,  $\V_{\overline{K}}({\bf P}\backslash {\bf H})$ and ${\rm V}^{U}({\bf P}\backslash {\bf H})$. The following Lemma \ref{UpdateSwell} is proposed in \cite{changbo}. Remark that the definition of regular chain in $K[U][X]$ in this paper is not exactly the same as that in \cite{changbo}  as mentioned in Remark \ref{REidea}. Therefore,   Lemma \ref{UpdateSwell} here is stated in our way.

\begin{lemma}\label{UpdateSwell}\mycite{changbo}
      Let ${\bf T}$ be a regular chain in $K[U][X]$. Then
      ${\bf T}$ specializes well at $a$ if and only if $a\in \overline{K}^d\backslash \V^{U}(\res{\I{{\T}}, {\T}})$.
\end{lemma}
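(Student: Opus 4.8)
The plan is to prove the equivalence by induction on $r=|{\bf T}|$, peeling off the top polynomial $T_r$ and combining the multiplicativity of the resultant with the behaviour of the successive resultant under the specialization $\phi_a$. First I would rephrase ``specializes well'': writing ${\bf T}=\{T_1,\dots,T_r\}$, the chain ${\bf T}$ specializes well at $a$ exactly when $\I{T_i}(a)\neq 0$ (as a polynomial in $X$) for every $i$ \emph{and} ${\bf T}(a)$ is a regular chain in $\overline K[X]$; indeed ``$\I{T_i}(a)\neq 0$ for all $i$'' is precisely ``$\rank{{\bf T}(a)}=\rank{{\bf T}}$'' (the leading term of each $T_i$ in its main variable survives, so no class or rank drops), and in that case $\I{T_i(a)}=\I{T_i}(a)$. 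Peeling off $T_r$, this says: ${\bf T}$ specializes well at $a$ iff ${\bf T}':=\{T_1,\dots,T_{r-1}\}$ specializes well at $a$, $\I{T_r}(a)\neq 0$, and $\res{\I{T_r(a)},{\bf T}'(a)}\neq 0$ --- the last two being exactly the extra conditions that promote the regular chain ${\bf T}'(a)$ to the regular chain ${\bf T}(a)$, using that a prefix of a regular chain is a regular chain.

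On the resultant side, since no $\I{T_i}$ involves $\mvar{T_r}$ (every initial has class strictly smaller than $\cls{T_r}$), one gets $\res{\I{{\bf T}},T_r,\mvar{T_r}}=\I{{\bf T}}^{\deg(T_r,\mvar{T_r})}$, and eliminating $\mvar{T_{r-1}},\dots,\mvar{T_1}$ together with multiplicativity of the resultant yields
\[
  \res{\I{{\bf T}},{\bf T}}=\bigl(\res{\I{{\bf T}'},{\bf T}'}\cdot\res{\I{T_r},{\bf T}'}\bigr)^{\deg(T_r,\mvar{T_r})}.
\]
Since $\overline K[X]$ is an integral domain, $\V^{U}(PQ)=\V^{U}(P)\cup\V^{U}(Q)$ and $\V^{U}(P^m)=\V^{U}(P)$, so $a\notin\V^{U}(\res{\I{{\bf T}},{\bf T}})$ iff $a\notin\V^{U}(\res{\I{{\bf T}'},{\bf T}'})$ and $a\notin\V^{U}(\res{\I{T_r},{\bf T}'})$; by the induction hypothesis the first of these is equivalent to ``${\bf T}'$ specializes well at $a$''.

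It remains to analyse $a\notin\V^{U}(\res{\I{T_r},{\bf T}'})$ assuming ${\bf T}'$ specializes well at $a$. Here I would invoke the specialization (base change) formula for resultants: since $\I{T_i}(a)\neq 0$ for $i\le r-1$, no $T_i$ drops degree in its main variable at $a$, so when $\res{\I{T_r},{\bf T}'}$ is formed by successively eliminating $\mvar{T_{r-1}},\dots,\mvar{T_1}$ the only correction factors are powers of the $\I{T_i}(a)$ and of already-formed iterated resultants; reorganising them gives
\[
  \res{\I{T_r},{\bf T}'}(a)=c\cdot\res{\I{T_r}(a),{\bf T}'(a)},\qquad c=\prod_{i=1}^{r-1}\res{\I{T_i(a)},\{T_{i-1}(a),\dots,T_1(a)\}}^{\,\delta_i}
\]
for suitable $\delta_i\ge 0$, and each factor of $c$ is nonzero precisely because ${\bf T}'(a)$ is a regular chain, so $c$ is a nonzero element of the integral domain $\overline K[X\setminus\mvar{{\bf T}'}]$. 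Hence $a\notin\V^{U}(\res{\I{T_r},{\bf T}'})$ iff $\res{\I{T_r}(a),{\bf T}'(a)}\neq 0$, and a short case split on whether $\I{T_r}(a)$ vanishes (if it does, $\res{\I{T_r}(a),{\bf T}'(a)}=\res{0,{\bf T}'(a)}=0$ since every $T_i(a)$ still has positive degree in its main variable; if it does not, $\I{T_r(a)}=\I{T_r}(a)$) shows this is equivalent to ``$\I{T_r}(a)\neq 0$ and $\res{\I{T_r(a)},{\bf T}'(a)}\neq 0$''. Combining with the reformulation from the first paragraph closes the induction.

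I expect the main obstacle to be the careful use of the base change formula for resultants and the verification that its correction factors never vanish --- this is exactly where the inductive hypothesis that ${\bf T}'$ specializes well (equivalently, the nonvanishing of the lower initials together with the regular chain conditions for ${\bf T}'(a)$) is indispensable, and it is what makes the induction go through. A secondary, bookkeeping-level point is keeping everything consistent with this paper's definition of regular chain over $K[U][X]$ (cf.\ Remark \ref{REidea}), in particular distinguishing $\I{T_r}(a)$ from $\I{T_r(a)}$ and handling the degenerate cases where an intermediate resultant specializes to $0$; alternatively one could deduce the statement from the corresponding result in \cite{changbo} by translating across the difference in definitions.
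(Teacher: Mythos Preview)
The paper does not supply its own proof of this lemma: it is quoted from \cite{changbo}, with the remark just before the statement that the formulation has been adapted to this paper's definition of regular chain (cf.\ Remark~\ref{REidea}). So there is no in-paper argument to compare your proposal against.

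That said, your inductive strategy is correct and is essentially the standard route to this result. The factorization
\[
\res{\I{{\bf T}},{\bf T}}=\bigl(\res{\I{{\bf T}'},{\bf T}'}\cdot\res{\I{T_r},{\bf T}'}\bigr)^{\deg(T_r,\mvar{T_r})}
\]
is valid (it follows from $\res{\I{{\bf T}},T_r,\mvar{T_r}}=\I{{\bf T}}^{\deg(T_r,\mvar{T_r})}$, since $\I{{\bf T}}$ is free of $\mvar{T_r}$, together with multiplicativity of the successive resultant), and your reduction of the equivalence to the claim ``assuming ${\bf T}'$ specializes well at $a$, one has $\res{\I{T_r},{\bf T}'}(a)\neq 0$ iff $\I{T_r}(a)\neq 0$ and $\res{\I{T_r(a)},{\bf T}'(a)}\neq 0$'' is exactly what is needed. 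The only place that wants tightening is the displayed form of the correction factor $c$: a step-by-step application of the specialization formula for resultants (Lemma~\ref{LEhuantongtai} handles the subresultant side) produces, at each elimination of $\mvar{T_i}$, a power of $\I{T_i}(a)$, and as these are pushed through the remaining resultants one picks up powers of $\res{\I{T_i}(a),\{T_{i-1}(a),\dots,T_1(a)\}}$ as well. All such factors are nonzero precisely because ${\bf T}'(a)$ is a regular chain with $\I{T_i(a)}=\I{T_i}(a)$, so your conclusion stands; but the specific exponents $\delta_i$ depend on the degree drops of the intermediate iterated resultants under $\phi_a$ and should be established by a short auxiliary induction rather than asserted. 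With that bookkeeping filled in, your argument is complete.
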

\section{Theory and Algorithm}\label{sectionus}

\subsection{Weakly Relatively Simplicial Decomposition}\label{subsectionrsd}

In this section,  we introduce weakly relatively simplicial decomposition (WRSD)  in zero-dimensional case,  which is a weaker concept compared to relatively simplicial decomposition (RSD) proposed in \cite{zjzi}.
\begin{definition}\label{Dewrsd}
Let $\T$ be a zero-dimensional regular chain in $K[U][X]$ and $P\in K[U][X]$. Suppose  $\mathbb{H}$ and $\mathbb{G}$ are two finite sets of zero-dimensional regular chains in $K[U][X]$.  If

$(1)$ $\V_{\overline{K(U)}}({\T}\cup \{P\})=\cup_{{\bf H}\in {\mathbb H}}{\V}_{\overline{K(U)}}({\bf H})$ and

$(2)$ $\V_{\overline{K(U)}}({\T}\backslash P)=\cup_{{\bf G}\in {\mathbb G}}{\V}_{\overline{K(U)}}({\bf G})$,\\
then $(\mathbb{H}, \mathbb{G})$ is said to be a {\em WRSD} of $\T$ {\it w.r.t.} $P$ in $K[U][X]$.
\end{definition}
\begin{definition}\label{Deprsd}
Suppose $(\mathbb{H}, \mathbb{G})$ is a WRSD of a zero-dimensional regular $\T$ {\it w.r.t.} a polynomial $P$ in $K[U][X]$. 
The WRSD $(\mathbb{H}, \mathbb{G})$ is said to be {\em stable} at $a\in \overline {K}^{d}$  if

$(1)$ $\T$ specializes well at $a$,

$(2)$ $\V({\T}(a)\cup \{P(a)\})=\cup_{{\bf H}\in {\mathbb H}}\V({\bf H}(a))$ and  ${\bf H}$ specializes well at $a$ for any ${\bf H}\in \mathbb{H}$, and

$(3)$ $\V({\T}(a)\backslash P(a))=\cup_{{\bf G}\in \mathbb{G}}\V({\bf G}(a))$ and  ${\bf G}$ specializes well at $a$ for any ${\bf G}\in \mathbb{G}$.
\end{definition}
\begin{remark}\label{RErsd}
A stronger concept,  RSD,   was firstly introduced by Yang and Zhang in \cite{zjzi, zjziii} and the algorithm can be seen in \cite{xia, zjziiii}. Note that an RSD is a WRSD but the converse is not true. For instance,  $(\{\{x_1^2, x_2\}\}, \{\{x_1+u, x_2\}\})$ is a WRSD but not an RSD of $\{(x_1+u)x_1^2, x_2\}$ {\it w.r.t.} $x_1+x_2$ in $\mathbb{R}[u][x_1,x_2]$ because $\prem{x_1+x_2, \{x_1^2, x_2\}}=x_1\neq 0$.
\end{remark}

\begin{algorithm}\label{ALPRSD}
\DontPrintSemicolon
\SetAlgoCaptionSeparator{. }
 \caption{\APRSD}
    \KwIn{A zero-dimensional regular chain ${\bf T}=\{T_1, \ldots, T_n\}$ in $K[U][X]$,   a polynomial $P\in K[U][X]$,  variables $X=\{x_1, \ldots, x_n\}$ }
    \KwOut{[${\mathbb H}$, ${\mathbb G}$, $F$],  where
       $(\mathbb{H}, \mathbb{G})$ is a WRSD of $\T$ {\it w.r.t.} $P$ in $K[U][X]$ and
       $F$ is a polynomial in $K[U]$ such that for any $a\in \overline{K}^d\backslash {\V}^{U}(F)$, the WRSD $(\mathbb{H}, \mathbb{G})$ of $\T$ {\it w.r.t.} $P$ is stable at $a$.}
  $\mathbb{H}$:=$\emptyset$, $\mathbb{G}$:=$\emptyset$, $F$:=$\res{\I{\T}, \T}$\;
 \If{$P$ is not reduced {\it w.r.t.} $\T$}{return ${\APRSD}({\T}, \prem{P, {\T}}, X)$}
 \If{$P=0$}{return [$\{{\T}\}$, $\emptyset$, $F$]}
 \If{$\cls{P}=0$}{return [$\emptyset$, $\{{\T}\}$,  $P\cdot F$]}
 \If{$\cls{P}\neq n$}{$W$:=${\APRSD}(\{T_1, \ldots, T_{\cls{P}}\}, P, \{x_1, \ldots, x_{\cls{P}}\})$\;
$\mathbb{H}$:=${\tt map}(t\rightarrow t\cup \{T_{\cls{P}+1}, \ldots, T_n\}, W_1)$\;
$\mathbb{G}$:=${\tt map}(t\rightarrow t\cup \{T_{\cls{P}+1}, \ldots, T_n\}, W_2)$\;
return [$\mathbb{H}$, $\mathbb{G}$, $F\cdot W_3$]\;
}
\eIf{$\res{P, {\T}}\neq 0$}
   {$F$:=$F\cdot\res{P, \T}$, $\mathbb{G}$:=$\{\T\}$}
   {compute the regular subresultant chain $S_{d_\upsilon}, \ldots,  S_{d_1}, S_{d_0}$ of $T_n$ and $P$ {\it w.r.t.} $x_n$\;
   \eIf{$n=1$}{
    $\mathbb{H}$:=$\{\{S_{d_1}\}\}$\;
    $Q$:=$\pquo{T_1, S_{d_1}, x_1}$\;
    $\mathbb{G}$:=${\APRSD}(\{Q\}, P, X)_2$, $F$:=${\APRSD}(\{Q\}, P, X)_3$\;}
     {$X_{n-1}$:=$X\backslash \{x_n\}$\;
$\mathbb{H}_0$:=${\APRSD}(\{{T}_1, \ldots, {T}_{n-1}\}, S_{d_0}, X_{n-1})_1$\;
$\mathbb{G}_0$:=${\APRSD}(\{{T}_1, \ldots, {T}_{n-1}\}, S_{d_0}, X_{n-1})_2$\;
$\mathbb{G}$:=$\mathbb{G}\cup {\tt map}(t\rightarrow t\cup \{T_n\},  \mathbb{G}_0)$\;
$F$:=$F\cdot {\APRSD}(\{{T}_1, \ldots, {T}_{n-1}\}, S_{d_0}, X_{n-1})_3$\;
$i$:=$0$, $S_{d_{\upsilon+1}}$:=$T_n$\;
 \While{$\mathbb{H}_i\neq \emptyset$}
      {$i$:=$i+1$, $\mathbb{H}_i$:=$\emptyset$, $\mathbb{G}_i$:=$\emptyset$\;
      Let $R_{d_i}$ be the $d_i$-th principal subresultant coefficient of $T_n$ and $P$ {\it w.r.t.} $x_n$\;
      \For{${\bf H}\in \mathbb{H}_{i-1}$}{$\mathbb{H}_i$:=$\mathbb{H}_i\cup {\APRSD}({\bf H}, {R_{d_i}}, X_{n-1})_1$\;
                               $\mathbb{G}_i$:=$\mathbb{G}_i\cup {\APRSD}({\bf H}, {R_{d_i}},  X_{n-1})_2$\;
                               $F$:=$F\cdot {\APRSD}({\bf H}, {R_{d_i}}, X_{n-1})_3$}
       \For{${\bf G}\in \mathbb{G}_i$}{$\mathbb{H}$:=$\mathbb{H}\cup \{{\bf G}\cup \{S_{d_i}\}\}\}$\;
 $Q$:=$\pquo{T_n, S_{d_i}, x_n}$\;
      \If{$\deg(Q, x_n)>0$}{$\mathbb{G}$:=$\mathbb{G}\cup {\APRSD}({\bf G}\cup \{Q\}, P, X)_2$\;$F$:=$F\cdot {\APRSD}({\bf G}\cup \{Q\}, P, X)_3$}}
      }
}}
   return [$\mathbb{H}, \mathbb{G}, F$]\;
\end{algorithm}
Now we present Algorithm \ref{ALPRSD} for computing WRSDs\footnotemark, \footnotetext{Lines 2 and 3 of Algorithm \ref{ALPRSD} can be removed without loss of correctness.}
which is different from Algorithm {\tt RSD} proposed in \cite{zjzi}.  
Assume that ${\tt Alg}$ is a name of an algorithm and $p_1, \ldots, p_t$ is a sequence of inputs of this algorithm. If the output of ${\tt Alg}(p_1, \ldots, p_t)$ is a finite list [$q_1, \ldots, q_s$],  $q_i$ is denoted by ${\tt Alg}(p_1, \ldots, p_t)_i$ for any $i$ $(1\leq i\leq s)$ and also said to be the $i$th output of ${\tt Alg}(p_1, \ldots, p_t)$. Given a finite set $S=\{s_1, \ldots, s_t\}$ and a map $\phi$ on $S$,  ${\tt op}(S)$ denotes the finite sequence $s_1, \ldots, s_t$  and ${\tt map}(s\rightarrow \phi(s),  S)$ denotes the set $\phi(S)$.

Before showing the termination and the correctness of Algorithm \ref{ALPRSD},  we need to prepare some statements. In the following discussion,  we assume that the readers are familiar with the theories of subresultants. The precise definitions of subresultant chain and regular subresultant chain can be seen in \cite{mishra, kahoui}  and Lemma \ref{LEhuantongtai} can be found in \cite{kahoui, xia}.
 \begin{lemma}\label{LEhuantongtai}\mycite{kahoui, xia}
         Let $\phi:\emph{R}\rightarrow \widetilde{\emph
         R}$ be a ring homomorphism.  Denote also by $\phi$ the induced homomorphism $\tilde{\phi}:\emph{R}[x]\rightarrow \widetilde{\emph R}[x]$,   where both $\emph{R}$ and $\widetilde{\emph R}$ are integral domains.
     Suppose $F$ and $G$ are polynomials in
     $\emph{R}[x]$ and $b$ and $c$ are the leading coefficients of $F$ and $G$  respectively. Assume that
     $m=\deg(F, x)\geq l=\deg(G, x)>0$ and $\widetilde{m}=\deg(\phi(F), x)\geq \widetilde{l}=\deg(\phi(G), x)>0$.
     If $\widetilde{m}>\widetilde{l}$,   let $\widetilde{\mu}=\widetilde{m}-1$,
     otherwise,  $\widetilde{\mu}=\widetilde{m}$. Suppose $S_j$ is the $j$-th subresultant of $F$ and $G$ {\it w.r.t.} $x$ and $\widetilde{S_j}$ is the $j$-th subresultant of $\phi(F)$ and $\phi(G)$ {\it w.r.t.} $x$. Then $\phi(S_j)=\delta\cdot
     \widetilde{S_j}$ for any $j$ $(0\leq j<\widetilde{\mu})$,   where\par
     \[\delta=
     \left\{
       \begin{array}{ll}
         1,   & \hbox{$\phi(b)\cdot\phi(c)\neq 0$,  } \\
         \phi(b)^{l-\widetilde{l}},   & \hbox{$\phi(b)\neq
         0$ and $\phi(c)=0$,  } \\
         (-1)^{(m-\widetilde{m})(l-j)} \phi(c)^{m-\widetilde{m}},   &
         \hbox{$\phi(b)=0$ and $\phi(c)\neq 0$,  } \\
         0,   & \hbox{$\phi(b)=\phi(c)=0$. }
       \end{array}
     \right.\]
 Furthermore,  if $m>l$,  let $\mu=m-1$,  otherwise,  let $\mu=m$. Suppose $R_j$ is the $j$-th principal subresultant coefficient of $F$ and $G$ {\it w.r.t.} $x$. Then $\phi(G)=0$ if $\phi(b)\neq 0$ and $\phi(R_j)=0$ for any $j$ $(0\leq j\leq \mu)$.
\end{lemma}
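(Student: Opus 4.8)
The plan is to reduce everything to the determinantal (Sylvester-matrix) formula for subresultants and then track how the entries of that matrix behave under $\phi$. Recall that the $j$-th subresultant $S_j$ of $F$ and $G$ w.r.t.\ $x$ is (up to the standard sign conventions) a polynomial in $x$ whose coefficients are the maximal minors of the submatrix of the Sylvester matrix of $F,G$ obtained by deleting appropriate rows and columns; in particular each coefficient of $S_j$ is a fixed universal polynomial in the coefficients of $F$ and $G$. First I would write down this formula explicitly for both the pair $(F,G)$ over $\emph{R}$ and the pair $(\phi(F),\phi(G))$ over $\widetilde{\emph R}$, being careful that the size of the Sylvester matrix is governed by the \emph{formal} degrees $m\geq l$ on the one side and by $\widetilde m\geq\widetilde l$ on the other. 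Because $\phi$ is a ring homomorphism and the induced map $\tilde\phi$ on $\emph R[x]$ is compatible with it, applying $\phi$ to every entry of the Sylvester matrix of $(F,G)$ yields a matrix whose minors are the $\phi$-images of the minors of the original; the whole question is how this $\phi$-image matrix compares with the genuine Sylvester matrix of $(\phi(F),\phi(G))$, whose shape may be smaller when $\phi(b)=0$ and/or $\phi(c)=0$.

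Next I would treat the four cases separately according to which of $\phi(b)$, $\phi(c)$ vanish. In the generic case $\phi(b)\phi(c)\neq 0$ the formal degrees are preserved, $\widetilde m=m$ and $\widetilde l=l$, the two Sylvester matrices have the same size, and $\phi$ simply maps one to the other entry by entry, so $\phi(S_j)=\widetilde{S_j}$ and $\delta=1$. When $\phi(b)\neq 0$ but $\phi(c)=0$ we have $\widetilde m=m$ but $\widetilde l<l$; here the $\phi$-image of the Sylvester matrix of $(F,G)$ has a block of zero rows/columns coming from the top coefficients of $G$ that died, and a cofactor expansion along those degenerate rows extracts exactly a factor of $\phi(b)^{l-\widetilde l}$ relative to the Sylvester minor of $(\phi(F),\phi(G))$ — this is the content of the classical ``gap'' or ``degree-drop'' formula for subresultants, and I would either cite it or re-derive it by the row operations that triangulate the leading block. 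The symmetric case $\phi(b)=0$, $\phi(c)\neq 0$ is analogous but now the degree drop is on $F$; the bound $j<\widetilde\mu$ (with $\widetilde\mu=\widetilde m-1$ when $\widetilde m>\widetilde l$ and $\widetilde\mu=\widetilde m$ otherwise) is precisely the range in which the relevant minors remain well-defined after the shrinkage, and the sign $(-1)^{(m-\widetilde m)(l-j)}$ together with the power $\phi(c)^{m-\widetilde m}$ comes from permuting the surviving rows past the vanished ones. Finally, if $\phi(b)=\phi(c)=0$ all top-degree data collapse and every minor in the relevant range maps to $0$, giving $\delta=0$.

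For the last assertion, suppose $\phi(b)\neq 0$; then $\widetilde m=m$, and I would argue that $\phi(G)=0$ is forced together with the vanishing of all $\phi(R_j)$ for $0\leq j\leq\mu$. This is really a statement about the principal subresultant coefficients $R_j$, which up to sign are the resultant-type minors sitting at the ``corner'' of the Sylvester construction; when $\phi(b)\ne 0$ but the degree of $\phi(G)$ has dropped below every index in the range, each such corner minor has a row of zeros in its $\phi$-image, hence maps to $0$, and the extreme case where \emph{all} of them vanish corresponds exactly to $\phi(G)$ being the zero polynomial. I would phrase this via the contrapositive if that reads more cleanly: if $\phi(G)\neq 0$ then some $\phi(R_j)\neq 0$.

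The main obstacle I anticipate is purely bookkeeping: getting the indexing of rows and columns in the truncated Sylvester matrices exactly right so that the cofactor expansions produce the stated powers $\phi(b)^{l-\widetilde l}$, $\phi(c)^{m-\widetilde m}$ and the sign $(-1)^{(m-\widetilde m)(l-j)}$ with no off-by-one error, and making sure the range restriction $0\le j<\widetilde\mu$ is used at exactly the point where the minors would otherwise become ill-defined. The conceptual content — ``$\phi$ commutes with taking minors, and a degree drop in $F$ or $G$ pulls out a power of the surviving leading coefficient'' — is standard subresultant theory (see \cite{kahoui}), so in the write-up I would lean on those references for the degree-drop identity and spend the detailed work only on matching their normalization to the one used here.
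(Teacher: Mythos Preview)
The paper does not supply its own proof of this lemma: it is stated with the citation \cite{kahoui, xia} and used as a black box, so there is no in-paper argument to compare against. Your determinantal approach via the Sylvester matrix, with case analysis on which of $\phi(b),\phi(c)$ vanish and cofactor expansion along the rows that collapse under a degree drop, is exactly the line taken in El~Kahoui \cite{kahoui}, so you are effectively reconstructing the cited source rather than diverging from it.

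One small point of caution on the final assertion. Your paragraph reads at first as if you intend to deduce both $\phi(G)=0$ and the vanishing of all $\phi(R_j)$ from $\phi(b)\neq 0$ alone; the lemma actually says that $\phi(b)\neq 0$ together with $\phi(R_j)=0$ for every $j$ in the range forces $\phi(G)=0$. Your contrapositive formulation (``if $\phi(G)\neq 0$ then some $\phi(R_j)\neq 0$'') is the right way to argue it, and it follows cleanly from the first part of the lemma once you observe that for $\phi(G)\neq 0$ of degree $\widetilde l$, the principal coefficient $\widetilde R_{\widetilde l}$ is a nonzero power of the new leading coefficient of $\phi(G)$, and $\phi(R_{\widetilde l})$ differs from it only by the nonzero factor $\phi(b)^{l-\widetilde l}$. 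Just make sure the write-up states the hypothesis/conclusion in the correct direction, and handle the boundary case $\widetilde l=0$ (where the first part's hypothesis $\widetilde l>0$ fails) directly via $\phi(S_0)=\phi(\res{F,G,x})$, which is still a Sylvester determinant and picks up the factor $\phi(b)^{l}$ times a nonzero power of the constant $\phi(G)$.
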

Roughly speaking,  Algorithm \ref{ALPRSD} is based on Lemma \ref{LEzerode},  which is inspired by the analogous results presented in \cite{wangi, wang}. Note that the results shown in Lemma \ref{LEzerode} is not covered by that in \cite{wangi, wang}.
\begin{lemma}\label{LEzerode}
Given two polynomials $F$ and $G$ in $K[U][X]$ $(0<\deg(G, x_n)< \deg(F, x_n))$,  suppose $S_{d_\upsilon}, \ldots, S_{d_1}, S_{d_0}$ is the regular subresultant chain of $F$ and $G$ {\it w.r.t.} $x_n$. Let $S_{d_{\upsilon+1}}=F$. Assume that  $R_{d_i}$ is the $d_i$th principal subresultant coefficient of $F$ and $G$ {\it w.r.t.} $x_n$ for any $i$ $(0\leq i\leq \upsilon+1)$ and $Q_{d_i}$ is the pseudoquotient of $F$ and $S_{d_i}$ {\it w.r.t.} $x_n$ for any $i$ $(1\leq i\leq \upsilon)$.  Then

$(1)$${\V}_{\overline{K(U)}}(\{F, G\}\backslash \I{F})=\cup_{i=1}^{\upsilon+1}{\V}_{\overline{K(U)}}(\{S_{d_i},  R_{d_{i-1}}, \ldots,  R_{d_0}\}\backslash \I{F} R_{d_i})$;

$(2)$${\V}_{\overline{K(U)}}(F\backslash G\I{F})={\V}_{\overline{K(U)}}(F\backslash G\I{F} R_{d_0})\cup \cup_{i=1}^{\upsilon}{\V}_{\overline{K(U)}}(\{Q_{d_i},  R_{d_{i-1}}, \ldots, R_{d_0}\}\backslash G\I{F} R_{d_i})$.
\end{lemma}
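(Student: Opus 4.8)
The plan is to exploit the fundamental block structure of the subresultant chain: for a point $a \in \overline{K(U)}^{n-1}$ in the ``parameter-variable'' space $x_1,\ldots,x_{n-1}$ (thinking of $F,G$ as polynomials in $x_n$ over $\overline{K(U)}[x_1,\ldots,x_{n-1}]$, or more precisely specializing via $\phi_a$ after adjoining the coordinates), the gcd of $F(a,x_n)$ and $G(a,x_n)$ in $\overline{K(U)}(a)[x_n]$ is governed by which principal subresultant coefficients vanish at $a$. Concretely, I would first recall the standard gap-structure theorem for subresultants (see \cite{mishra, kahoui}): if $R_{d_0}(a) = \cdots = R_{d_{i-1}}(a) = 0$ but $R_{d_i}(a) \neq 0$ (with the convention $R_{d_{\upsilon+1}} = \I{F}$, so that $R_{d_{\upsilon+1}}(a)\neq 0$ means the leading coefficient survives), then $S_{d_i}(a, x_n)$ is, up to a nonzero scalar, the gcd of $F(a,x_n)$ and $G(a,x_n)$, and in particular $\deg(\gcd) = d_i$. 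Here Lemma \ref{LEhuantongtai} is exactly the tool that lets me pass between subresultants of $F,G$ and subresultants of their specializations and control the spurious scalar factor $\delta$, so that the specialized $S_{d_i}$ really does compute the gcd whenever the relevant leading coefficients are preserved.

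For part (1), I would argue pointwise. Fix $a = (a_1,\ldots,a_n) \in \overline{K(U)}^n$. The point $a$ lies in the left-hand set iff $F(a) = G(a) = 0$ and $\I{F}(a') \neq 0$, where $a' = (a_1,\ldots,a_{n-1})$. Given such an $a$, both $F(a',x_n)$ and $G(a',x_n)$ have $x_n = a_n$ as a common root, so $\deg(\gcd(F(a',\cdot), G(a',\cdot))) \geq 1$; hence there is a unique index $i$ with $1 \leq i \leq \upsilon+1$ such that $R_{d_{i-1}}(a') = \cdots = R_{d_0}(a') = 0$ and $R_{d_i}(a') \neq 0$ — note that for $i = \upsilon+1$ this says all genuine principal subresultant coefficients vanish while $\I{F}(a')\neq 0$. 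By the gap structure together with Lemma \ref{LEhuantongtai}, $S_{d_i}(a', x_n)$ agrees up to a nonzero constant with this gcd, so $S_{d_i}(a) = 0$ as well; thus $a$ lies in the $i$-th set on the right. Conversely, if $a$ lies in the $i$-th right-hand set, then $R_{d_i}(a')\neq 0$ and $\I{F}(a')\neq 0$ and $S_{d_i}(a',x_n)$ is again (up to scalar) the gcd; since $S_{d_i}(a) = 0$ this gcd has $x_n = a_n$ as a root, forcing $F(a) = G(a) = 0$. The removed hypersurfaces $\I{F} R_{d_i}$ are precisely what is needed to make the specialization argument valid, so the two sides coincide.

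Part (2) is the analogous statement for the set difference $\V(F \setminus G\I{F})$, i.e.\ points where $F(a) = 0$, $\I{F}(a') \neq 0$, and $G(a) \neq 0$. Again fix such an $a$ and let $i$ be the unique index ($0 \leq i \leq \upsilon$) with $R_{d_i}(a') = \cdots = R_{d_0}(a') = 0$ and $R_{d_{i+1}}(a')\neq 0$; here $i$ can be $0$, meaning $R_{d_0}(a')\neq 0$, which corresponds to the first term $\V(F\setminus G\I{F}R_{d_0})$. When $i \geq 1$, the gcd of $F(a',\cdot)$ and $G(a',\cdot)$ is $S_{d_i}(a',x_n)$ (up to scalar) of degree $d_i$, and the pseudoquotient $Q_{d_i} = \pquo{F, S_{d_i}, x_n}$ specializes, using Lemma \ref{LEhuantongtai} to control leading coefficients, so that $F(a',x_n)$ factors as $S_{d_i}(a',x_n)\,Q_{d_i}(a',x_n)$ up to a nonzero constant; since $x_n=a_n$ is a root of $F(a')$ but $G(a)\neq 0$ means $a_n$ is \emph{not} a root of the gcd $S_{d_i}(a')$, it must be a root of $Q_{d_i}(a')$, i.e.\ $Q_{d_i}(a) = 0$, while the $R_{d_j}(a') = 0$ for $j < i$ place $a$ in the $i$-th term. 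Conversely a point in the $i$-th term satisfies $R_{d_{i+1}}(a')\neq 0$ (from $\I{F}R_{d_i}$ when $i=0$ read as $\I{F}$, and from the construction otherwise) so the factorization $F = c\, S_{d_i} Q_{d_i}$ specializes, $Q_{d_i}(a)=0$ gives $F(a)=0$, and $a_n$ being a root of $Q_{d_i}(a')$ but the setup forcing $\gcd(F(a'),G(a')) = S_{d_i}(a')$ prevents $a_n$ from being a common root, so $G(a)\neq 0$.

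I expect the main obstacle to be the careful bookkeeping of the scalar factor $\delta$ from Lemma \ref{LEhuantongtai} and, relatedly, the degenerate boundary indices ($i = \upsilon+1$ in part (1), $i = 0$ in part (2)), where the ``principal subresultant coefficient'' must be reinterpreted as $\I{F}$; one has to check that the specialization of $F$ does not drop degree precisely because $\I{F}(a') \neq 0$, which is exactly why $\I{F}$ is among the removed polynomials. A secondary technical point is verifying that the pseudoquotient behaves well under specialization — this again reduces to knowing the leading coefficient of $S_{d_i}$ (namely $R_{d_i}$) does not vanish at $a'$, which holds on the chosen stratum. Once the block/gap structure of subresultants and Lemma \ref{LEhuantongtai} are in hand, the rest is a disjoint-stratification argument over the values of the tuple $(R_{d_0},\ldots,R_{d_{\upsilon}}, \I{F})$ at $a'$.
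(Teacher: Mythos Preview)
Your approach is essentially the same as the paper's: a pointwise stratification on $a'=(a_1,\ldots,a_{n-1})$ according to the first index $j$ with $R_{d_j}(a')\neq 0$, combined with Lemma~\ref{LEhuantongtai} to identify $S_{d_j}(a',x_n)$ (up to a unit) with $\gcd(F(a',x_n),G(a',x_n))$. The paper singles out the case $G(a')=0$ explicitly, whereas you absorb it into the boundary index $i=\upsilon+1$; these are the same thing via the last clause of Lemma~\ref{LEhuantongtai}.

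Two small points to clean up in part~(2). First, your indexing drifts: the $i$-th term on the right of~(2) already records $R_{d_{i-1}}(a')=\cdots=R_{d_0}(a')=0$ together with $R_{d_i}(a')\neq 0$ (not $R_{d_{i+1}}$), so the gcd on that stratum is $S_{d_i}(a',x_n)$, and your forward/converse arguments should be reindexed accordingly. Second, in the converse you do not need to argue that $G(a)\neq 0$, since $G$ is already among the removed polynomials; the only thing to check is $F(a)=0$, which follows from the specialized pseudo-division identity $R_{d_i}(a')^{k}F(a',x_n)=S_{d_i}(a',x_n)\,Q_{d_i}(a',x_n)$ (the pseudo-remainder vanishes at $a'$ because $S_{d_i}(a',x_n)$ divides $F(a',x_n)$ and has strictly larger degree than the remainder). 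With these adjustments your argument matches the paper's.
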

\begin{proof}
Assume that $S_{\mu+1}, S_{\mu}, \ldots, S_1, S_0$ is the subresultant chain of $F$ and $G$
{\it w.r.t.} $x_n$. Remark that $S_{d_0}=S_0=\res{F, G}$, $S_{\mu}=G$, $S_{\mu+1}=F$ and $S_{d_\upsilon}=\I{G}^cG$ where $c$ is a non-negative integer.

(1)Assume that ${\V}_{\overline{K(U)}}(\{F, G\}\backslash \I{F})\neq \emptyset$. For any $(a_1, \ldots, a_n)\in {\V}_{\overline{K(U)}}(\{F, G\}\backslash\I{F})$,  let $b=(a_1, \ldots, a_{n-1})$. If $G(b)=0$,  by the definition of principal subresultant coefficient,  $R_{d_i}(b)=0$ and thus $R_{d_i}(a)=0$ for any $i$ $(1\leq i\leq \upsilon)$. Hence,  $(a_1, \ldots, a_n)\in {\V}_{\overline{K(U)}}(\{S_{d_{\upsilon+1}},  R_{d_\upsilon}, \ldots, R_{d_0}\}\backslash \I{F}R_{d_{\upsilon+1}})$.
If $\deg(G(b), x_n)>0$,  since $\deg(G(b), x_n)<\deg(F(b), x_n)=\deg(F, x_n)$,  it is reasonable to assume that the subresultant chain of $F(b)$ and $G(b)$ {\it w.r.t.} $x_n$ is $\widetilde{S}_{\mu+1}, \widetilde{S}_{\mu}, \ldots, \widetilde{S}_1, \widetilde{S}_0$ and  the associated principal coefficients are $ \widetilde{R}_{\mu+1}, \widetilde{R}_{\mu}, \ldots, \widetilde{R}_1, \widetilde{R}_0$. Note that $\widetilde{S}_{\mu}=G(b), \widetilde{S}_{\mu+1}=F(b)$ and by Lemma \ref{LEhuantongtai},  we know that $S_j(b)=\I{F}(b)^{r_j}\widetilde{S}_j$  where $r_j$ is a non-negative integer for any $j$ $(1\leq j\leq \mu+1)$. According to the theories of subresultant chains,  there exists an integer $j$ $(1\leq j\leq \mu)$ such that ${\widetilde{R}_j}\neq 0$ and ${\widetilde{R}_0}=\ldots={\widetilde{R}_{j-1}}=0$. Then $R_j(b)\neq 0$ and $R_0(b)=\ldots=R_{j-1}(b)=0$. In addition,  $\widetilde{S}_j$ is the greatest common divisor of $F(b)$ and $G(b)$ in $\overline{K(U)}[x_n]$ and $\deg(\widetilde{S}_j, x_n)=j$.   Hence $\widetilde{S}_j(a_n)=0$ by $F(b)(a_n)=G(b)(a_n)=0$.  Note that $\deg(S_j, x_n)=\deg(S_j(b),  x_n)=\deg(\widetilde{S}_j, x_n)=j$,  so there exists some $i$ $(1\leq i\leq \upsilon)$ such that $d_i=j$. Therefore,  $(a_1, \ldots, a_n)\in {\V}_{\overline{K(U)}}(\{S_{d_i}, {R_{d_{i-1}}}, \ldots,  {R_{d_0}}\}\backslash \I{F}{R_{d_i}})$.

On the other hand,  for any $(a_1, \ldots, a_n)\in {\V}_{\overline{K(U)}}(\{S_{d_{\upsilon+1}},  {R_{d_\upsilon}}, \ldots, {R_{d_0}}\}\backslash \I{F}{R_{d_{\upsilon+1}}})$,  let $b=(a_1, \ldots, a_{n-1})$. As ${R_{d_i}}(b)=0$ for any $i$ $(1\leq i\leq \upsilon)$,  $G(b)=0$ follows from Lemma \ref{LEhuantongtai}. Hence,  $(a_1, \ldots, a_n)\in {\V}_{\overline{K(U)}}(\{F, G\}\backslash \I{F})$.  For any $i$ $(1\leq i\leq \upsilon)$ and for any  $(a_1, \ldots, a_n)\in {\V}_{\overline{K(U)}}(\{S_{d_i}, {R_{d_{i-1}}}, \ldots, {R_{d_0}}\}\backslash \I{F}{R_{d_i}})$,  it is not difficult to check $(a_1, \ldots, a_n)\in {\V}_{\overline{K(U)}}(\{F, G\}\backslash \I{F})$ similarly as what has been discussed in the last paragraph.

(2) The proof is similar to that of (1).
\end{proof}
\begin{remark}
If $F$ and $G$ are polynomials in $\overline{K}[X]$ $(0<\deg(G, x_n)< \deg(F, x_n))$,  suppose $S_{d_\upsilon}, \ldots, S_{d_1}, S_{d_0}$ is the regular subresultant chain of $F$ and $G$ {\it w.r.t.} $x_n$. Let $S_{d_{\upsilon+1}}=F$. Assume that  $R_{d_i}$ $(0\leq i\leq \upsilon+1)$ is the $d_i$th principal subresultant coefficient of $F$ and $G$ {\it w.r.t.} $x_n$ and $Q_{d_i}$ $(1\leq i\leq \upsilon)$ is the pseudoquotient of $F$ and $S_{d_i}$ {\it w.r.t.} $x_n$.   Similarly,  we have\\
$(1)$${\V}(\{F, G\}\backslash \I{F})=\cup_{i=1}^{\upsilon+1}{\V}(\{S_{d_i},  R_{d_{i-1}}, \ldots,  R_{d_0}\}\backslash \I{F} R_{d_i})$;\\
$(2)$${\V}(F\backslash G\I{F})={\V}(F\backslash G\I{F} R_{d_0})\cup \cup_{i=1}^{\upsilon}{\V}(\{Q_{d_i},  R_{d_{i-1}}, \ldots, R_{d_0}\}\backslash G\I{F} R_{d_i})$.
\end{remark}
\begin{lemma}\label{LESPRES}
Let $P\in K[U][X]$ and ${\bf T}=\{T_1,  \ldots,  T_n\}$ be a zero-dimensional regular chain in $K[U][X]$.  If  $S_0=\res{P,  {\bf T}}\neq 0$,  then $\res{P(a), {\bf T}(a)}\neq 0$ for any $a\in\overline{K}^d\backslash{\V}^U(S_0\res{\I{\T}, {\T}})$.
\end{lemma}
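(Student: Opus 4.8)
The statement $\res{P,\T}\neq 0$ says the resultant, an element of $K[U]$, is a nonzero polynomial in the parameters; I want to conclude that after specializing the parameters at a point $a$ avoiding a suitable hypersurface, the specialized resultant $\res{P(a),\T(a)}$ is still nonzero. The natural route is to apply Lemma \ref{LEhuantongtai} iteratively, peeling off the variables $x_n, x_{n-1},\ldots,x_1$ one at a time, with the homomorphism $\phi_a$ (extended to $K(U)[X]$ wherever initials stay nonzero). Recall $\res{P,\T}=\res{\ldots\res{\res{P,T_n},T_{n-1}},\ldots,T_1}$, so the key is: each single resultant $\res{\cdot,T_j,x_j}$ is a $0$-th subresultant, and Lemma \ref{LEhuantongtai} tells us that $\phi_a$ commutes with taking subresultants up to a nonzero scalar $\delta$, \emph{provided} the leading coefficients in $x_j$ do not both vanish under $\phi_a$ and the degrees in $x_j$ are preserved appropriately.

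First I would set up the exclusion locus. The leading coefficient of $T_j$ in $x_j$ is $\I{T_j}$; for the degrees of the $T_j$'s in their main variables to be preserved under $\phi_a$ it suffices that $\I{T_j}(a)\neq 0$ for all $j$, and for the regular chain $\T$ to specialize well we additionally need $\res{\I{\T},\T}(a)\neq 0$ — which by Lemma \ref{UpdateSwell} is exactly the condition $a\notin\V^U(\res{\I{\T},\T})$ and, as noted in that lemma's context, already implies each $\I{T_j}(a)\neq 0$ (more precisely, it implies $\T(a)$ is a zero-dimensional regular chain with the same ranks). So on $\overline K^d\setminus\V^U(\res{\I{\T},\T})$ every $T_j(a)$ has the same degree in $x_j$ as $T_j$, and $\T(a)$ is a zero-dimensional regular chain. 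Second, I would track the intermediate polynomials: let $P_n=P$ and $P_{j-1}=\res{P_j,T_j,x_j}$, so $P_0=\res{P,\T}=S_0$. Applying Lemma \ref{LEhuantongtai} to $F=P_j$, $G=T_j$ with respect to $x_j$, the hypothesis $\phi_a(\I{T_j})\neq 0$ puts us in the first or second case of the formula for $\delta$ (never the vanishing case), so $\phi_a(P_{j-1})=\phi_a\big(\res{P_j,T_j,x_j}\big)=\delta_j\cdot\res{\phi_a(P_j),\phi_a(T_j),x_j}$ with $\delta_j$ a nonzero element — here I must be slightly careful because Lemma \ref{LEhuantongtai} is stated for subresultants $S_i$ with $i<\widetilde\mu$, and the $0$-th subresultant (the resultant) is covered since $\widetilde\mu\geq 1$ as long as $\deg(\phi_a(T_j),x_j)>0$, which holds by the preceding paragraph. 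Iterating from $j=n$ down to $j=1$ gives $\phi_a(S_0)=\big(\prod_j\delta_j\big)\cdot\res{P(a),\T(a)}$.

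Third, combine: since $S_0\neq 0$ as a polynomial in $K[U]$, the set $\V^U(S_0)$ is a proper subvariety, and for $a\notin\V^U(S_0)$ we have $\phi_a(S_0)=S_0(a)\neq 0$; hence for $a\in\overline K^d\setminus\V^U\big(S_0\,\res{\I{\T},\T}\big)$ both the nonvanishing of $\prod_j\delta_j$ (guaranteed by $a\notin\V^U(\res{\I{\T},\T})$) and $\phi_a(S_0)\neq 0$ hold, forcing $\res{P(a),\T(a)}\neq 0$. That is precisely the claim.

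The main obstacle is the bookkeeping in the iterative application of Lemma \ref{LEhuantongtai}: at each step one must confirm the degree and leading-coefficient hypotheses of the lemma are met \emph{after} the previous specialization, i.e. that $\deg(P_j,x_j)$ and $\deg(\phi_a(P_j),x_j)$ both behave well. One clean way to sidestep part of this is to observe that $\res{P_j,T_j,x_j}$ can be computed as a determinant (the Sylvester resultant) in the coefficients of $P_j$ and $T_j$, and a ring homomorphism commutes with determinants; the only subtlety is the degree drop $\deg(\phi_a(P_j),x_j)<\deg(P_j,x_j)$, which changes the Sylvester matrix size and is exactly what the $\delta$ factor in Lemma \ref{LEhuantongtai} accounts for — so invoking that lemma directly, rather than re-deriving it, is the right level of detail. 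I would also remark that $P$ need not be reduced with respect to $\T$; this does not matter, since $\res{\cdot,T_j,x_j}$ is defined for arbitrary degree in $x_j$ and the argument only uses $\deg(T_j,x_j)$, which is positive and preserved.
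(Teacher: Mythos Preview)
Your approach is correct and matches the paper's, whose entire proof is the single line ``It is not difficult to prove the conclusion by induction on $n$.'' The only wrinkle in your write-up is that the displayed product formula $\phi_a(S_0)=\bigl(\prod_j\delta_j\bigr)\,\res{P(a),\T(a)}$ is not literally accurate, since each $\delta_j$ may involve the lower variables $x_1,\dots,x_{j-1}$ and therefore picks up further exponents when fed through the subsequent resultants; this is exactly the bookkeeping you already flag, it does not affect nonvanishing (nonzero factors stay nonzero in an integral domain), and packaging the argument as a genuine induction on $n$ rather than an unrolled iteration makes the issue disappear.
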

\begin{proof}
 It is not difficult to prove the conclusion by induction on $n$.
\end{proof}

\begin{lemma}\label{LERED}
Given a zero-dimensional regular chain ${\T}=\{T_1,  \ldots,  T_n\}$ in $K[U][X]$ and a polynomial $P\in K[U][X]$,  suppose $P_1=\prem{P, {\T}}$. Then ${\V}_{\overline{K(U)}}({\T}\cup \{P\})={\V}_{\overline{K(U)}}({\T}\cup \{P_1\})$ and ${\V}_{\overline{K(U)}}({\T}\backslash P)={\V}_{\overline{K(U)}}({\T}\backslash P_1)$. Furthermore,  ${\V}({\T}(a)\cup \{P(a)\})={\V}({\T}(a)\cup \{P_1(a)\})$ and ${\V}({\T}(a)\backslash P(a))={\V}({\T}(a)\backslash P_1(a))$ for any $a\in \overline{K}^d\backslash {\V}^U(\res{\I{\T}, {\T}})$.
\end{lemma}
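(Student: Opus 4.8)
The plan is to reduce everything to the single-step case and exploit the pseudo-division identity together with the fact that the initials $\I{T_i}$ never vanish on the relevant quasi-varieties. Recall that $P_1 = \prem{P,\T}$ is obtained by a finite chain of pseudo-divisions $P \rightsquigarrow \prem{P,T_n} \rightsquigarrow \prem{\prem{P,T_n},T_{n-1}} \rightsquigarrow \cdots$, and each step has the shape $\I{T_i}^{s_i} P^{(i)} = Q_i T_i + P^{(i-1)}$ for suitable polynomials $Q_i$ and exponents $s_i\ge 0$. Since the claim is transitive in an obvious way (if it holds for one pseudo-division step and the hypotheses propagate, it holds for the composite), it suffices to prove the statement when $P_1 = \prem{P, T_i}$ for a single $T_i \in \T$, i.e.\ when $\I{T_i}^{s} P = Q\, T_i + P_1$.

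For the generic (over $\overline{K(U)}$) statement, first observe $\I{T_i}$ does not vanish on ${\V}_{\overline{K(U)}}(\T\setminus\I{\T})$: this is because $\res{\I{T_i},\{T_{i-1},\ldots,T_1\}}\neq 0$ by the regular-chain condition, so by Proposition \ref{PRRCP2}(3) the point $\I{T_i}$ is nonzero at every common zero of $T_1,\ldots,T_i$, and a fortiori at every point of ${\V}_{\overline{K(U)}}(\T\setminus\I{\T})$. Now on that set, the identity $\I{T_i}^{s} P = Q\,T_i + P_1$ together with $T_i = 0$ gives $\I{T_i}^{s} P = P_1$, and since $\I{T_i}\ne 0$ there, $P$ and $P_1$ have the same zero set among points of ${\V}_{\overline{K(U)}}(\T\setminus\I{\T})$. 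This yields both ${\V}_{\overline{K(U)}}(\T\cup\{P\}) = {\V}_{\overline{K(U)}}(\T\cup\{P_1\})$ (intersect with the $P=0$ resp.\ $P_1=0$ locus) and ${\V}_{\overline{K(U)}}(\T\setminus P) = {\V}_{\overline{K(U)}}(\T\setminus P_1)$ (remove that locus); here one uses that ${\V}_{\overline{K(U)}}(\T) = {\V}_{\overline{K(U)}}(\T\setminus\I{\T})$ for a zero-dimensional regular chain, since every point of ${\V}_{\overline{K(U)}}(\T)$ automatically has $\I{\T}\neq 0$ by the same Proposition \ref{PRRCP2}(3) argument.

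For the specialized statement at $a\in\overline{K}^d\setminus{\V}^U(\res{\I{\T},\T})$, apply $\phi_a$ to the pseudo-division identity to get $\I{T_i}(a)^{s}P(a) = Q(a)T_i(a) + P_1(a)$. The key point is that $\T(a)$ is again a zero-dimensional regular chain with the same rank: by Lemma \ref{UpdateSwell}, $a\notin{\V}^U(\res{\I{\T},\T})$ means exactly that $\T$ specializes well at $a$. Hence the very same argument as above — now carried out in $\overline{K}[X]$ using Remark \ref{REPRRCP2}(3) in place of Proposition \ref{PRRCP2}(3) — shows $\I{T_i}(a)$ does not vanish on ${\V}(\T(a)\setminus\I{\T(a)}) = {\V}(\T(a))$, and the identity then gives the two claimed equalities. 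Finally, to run the induction/transitivity for the full successive pseudo-remainder, I would peel off the outermost pseudo-division $P' = \prem{P,T_n}$, apply the single-step result to the sub-chain (noting that $\{T_1,\ldots,T_{n-1}\}$ is still a regular chain, though not zero-dimensional, so one should phrase the single-step lemma for general regular chains — the argument above only used $T_i = 0$ on the relevant variety and $\I{T_i}\ne 0$ there, which hold regardless of dimension), and then compose.

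The main obstacle I anticipate is purely bookkeeping rather than conceptual: making the transitivity/induction clean requires a single-step lemma stated for a (not necessarily zero-dimensional) regular chain, and one must be a little careful that the auxiliary polynomials $\res{\I{T_i},\{T_{i-1},\ldots,T_1\}}$ appearing at intermediate steps are all divisors of $\res{\I{\T},\T}$ (or at least vanish on a subvariety of ${\V}^U(\res{\I{\T},\T})$), so that the single exceptional set ${\V}^U(\res{\I{\T},\T})$ stated in the lemma genuinely suffices for all steps at once. This is why the lemma is stated with that particular polynomial, and verifying it is the one slightly delicate point; everything else is the one-line substitution argument above. The authors evidently regard this as routine, which is consistent with the terse "similar to" style of the nearby proofs.
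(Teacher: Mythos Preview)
Your proposal is correct and follows essentially the same approach the paper has in mind: the paper's proof is the single sentence ``It is easy to prove the conclusion by the definition of successive pseudodivision and Lemma~\ref{UpdateSwell},'' and you have unpacked exactly those two ingredients.

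One small remark: the bookkeeping concern you raise in the final paragraph is unnecessary. Rather than inducting step by step through sub-chains, you can use directly the combined identity furnished by successive pseudo-division, namely that there exist non-negative integers $s_1,\ldots,s_n$ and polynomials $Q_1,\ldots,Q_n$ with
\[
\I{T_1}^{s_1}\cdots\I{T_n}^{s_n}\,P \;=\; Q_1 T_1 + \cdots + Q_n T_n + P_1.
\]
On ${\V}_{\overline{K(U)}}(\T)$ every $T_i$ vanishes and (by the argument you already gave via Proposition~\ref{PRRCP2}(3)) every $\I{T_i}$ is nonzero, so $P$ and $P_1$ have the same zero locus there; specializing the identity at $a$ and invoking Lemma~\ref{UpdateSwell} plus Remark~\ref{REPRRCP2}(3) handles the second part in one stroke. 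This sidesteps any need to track divisibility relations among the intermediate resultants.
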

\begin{proof}
It is easy to prove the conclustion by the definition of successive pseudodivision and Lemma \ref{UpdateSwell}.
\end{proof}
\begin{theorem}\label{THwrsd}
Algorithm \ref{ALPRSD} terminates correctly.
\end{theorem}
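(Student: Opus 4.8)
The plan is to prove termination and correctness simultaneously, by a double induction whose primary parameter is $n=|\mathbf{T}|$ (equivalently the number of variables) and whose secondary parameter is $\deg(T_n,x_n)$. The preprocessing lines~2--3 fire at most once, since $\prem{P,\mathbf{T}}$ is reduced w.r.t.\ $\mathbf{T}$ and hence a fixed point of successive pseudo-division; afterwards $P$ may be assumed reduced, and Lemma~\ref{LERED} reconciles the output produced for the reduced input with the one demanded for the original $P$, both over $\overline{K(U)}$ and, for $a\notin\V^{U}(\res{\I{\mathbf{T}},\mathbf{T}})$, after specialization. The three terminal cases are immediate: if $P=0$ then $\V_{\overline{K(U)}}(\mathbf{T}\cup\{P\})=\V_{\overline{K(U)}}(\mathbf{T})$ and $\V_{\overline{K(U)}}(\mathbf{T}\setminus P)=\emptyset$; if $\cls P=0$ then $P$ is a nonzero element of $\overline{K(U)}$, so those two varieties are $\emptyset$ and $\V_{\overline{K(U)}}(\mathbf{T})$; and if $\res{P,\mathbf{T}}\neq0$ then $\V_{\overline{K(U)}}(\mathbf{T})\cap\V_{\overline{K(U)}}(P)=\emptyset$ by Proposition~\ref{PRRCP2}(3). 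In each such case, stability at $a\notin\V^{U}(F)$ follows from the factors placed into $F$ (namely $P$, resp.\ $\res{P,\mathbf{T}}$) together with Lemmas~\ref{LESPRES} and \ref{UpdateSwell}.

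For the branch $0<\cls P<n$ (lines~8--13) the argument is a fibered-product one: $\V_{\overline{K(U)}}(\mathbf{T}\cup\{P\})$ projects onto $\V_{\overline{K(U)}}(\{T_1,\dots,T_{\cls P}\}\cup\{P\})$ with fibers cut out by $T_{\cls P+1},\dots,T_n$, and similarly for $\mathbf{T}\setminus P$. The recursive call strictly decreases $n$; re-adjoining $T_{\cls P+1},\dots,T_n$ to any returned chain $\mathbf{H}$ keeps it a zero-dimensional regular chain, since $\V_{\overline{K(U)}}(\mathbf{H})\subseteq\V_{\overline{K(U)}}(\{T_1,\dots,T_{\cls P}\})$ and the regularity of $\mathbf{T}$ force $\res{\I{T_j},\mathbf{H}\cup\{T_{\cls P+1},\dots,T_{j-1}\}}\neq0$ for $j>\cls P$ via Proposition~\ref{PRRCP2}(3), and it clearly preserves both variety equalities. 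Stability is inherited from the induction hypothesis by Lemma~\ref{UpdateSwell} applied to the enlarged chains, which is why $\res{\I{\mathbf{T}},\mathbf{T}}$ is the only factor $F$ needs to pick up here.

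The core is the branch $\res{P,\mathbf{T}}=0$, $\cls P=n$, $P$ reduced and nonzero, where the algorithm runs Lemma~\ref{LEzerode} with $F:=T_n$, $G:=P$, w.r.t.\ $x_n$. For $n=1$, $\V_{\overline{K(U)}}(\{T_1,P\})=\V_{\overline{K(U)}}(S_{d_1})$ gives $\mathbb{H}$, and since $P$ is reduced and nonzero the pseudoquotient $Q=\pquo{T_1,S_{d_1},x_1}$ satisfies $0<\deg(Q,x_1)<\deg(T_1,x_1)$, so the secondary induction applies to the recursive call producing $\mathbb{G}$, which is correct because $\V_{\overline{K(U)}}(\{T_1\}\setminus P)=\V_{\overline{K(U)}}(\{Q\}\setminus P)$; note that the non-vanishing of the principal subresultant coefficient $R_{d_1}$ at $a$ needed for stability enters $F$ through $\res{\I{\{Q\}},\{Q\}}$ in the recursive call, since $\I{Q}$ has both $R_{d_1}$ and $\I{T_1}$ as factors. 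For $n>1$, Lemma~\ref{LEzerode}(1) splits $\V_{\overline{K(U)}}(\mathbf{T}\cup\{P\})$ into pieces indexed by the regular subresultant chain, and the nested \texttt{while}/\texttt{for} loops realize them: take a WRSD of $\{T_1,\dots,T_{n-1}\}$ against $R_{d_0}=S_{d_0}$, refine each resulting branch against $R_{d_1}$, then $R_{d_2}$, and so on, and adjoin $S_{d_i}$ to each $\mathbb{G}$-branch $\mathbf{G}$ of the $i$-th refinement. Here $\mathbf{G}\cup\{S_{d_i}\}$ is again a zero-dimensional regular chain because $\I{S_{d_i}}=R_{d_i}$ does not vanish on $\V_{\overline{K(U)}}(\mathbf{G})$ (Proposition~\ref{PRRCP2}(3) once more), and Lemma~\ref{LEzerode}(2), with $Q_{d_i}=\pquo{T_n,S_{d_i},x_n}$ and the recursive calls $\APRSD(\mathbf{G}\cup\{Q_{d_i}\},P,X)$ of strictly smaller $x_n$-degree, produces $\mathbb{G}$. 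Termination of the \texttt{while} loop is the one spot needing a semantic rather than a size argument: after the $i$-th iteration every $\mathbf{H}\in\mathbb{H}_i$ has $\V_{\overline{K(U)}}(\mathbf{H})\subseteq\V_{\overline{K(U)}}(R_{d_0})\cap\cdots\cap\V_{\overline{K(U)}}(R_{d_i})$, so by $i=\upsilon+1$ all principal subresultant coefficients vanish on these components; Lemma~\ref{LEhuantongtai} then forces $P$ to vanish identically in $x_n$ there, while $\I{T_n}$ does not (regularity of $\mathbf{T}$ plus Proposition~\ref{PRRCP2}(3)), so $\APRSD(\mathbf{H},\I{T_n},X_{n-1})_1=\emptyset$ and $\mathbb{H}_{\upsilon+1}=\emptyset$.

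The step I expect to be the main obstacle is the stability bookkeeping in the $n>1$ branch: one must check that, for every $a\notin\V^{U}(F)$, (a) by Lemma~\ref{LEhuantongtai} the regular subresultant chain of $T_n(a),P(a)$ agrees with that of $T_n,P$ up to nonzero factors, which requires $\I{T_n}$ and the relevant $R_{d_i}$ non-vanishing at $a$; (b) every re-assembled chain, $\mathbf{G}\cup\{S_{d_i}\}$ and $\mathbf{G}\cup\{Q_{d_i}\}$, specializes well at $a$, using the induction hypothesis for $\mathbf{G}$ together with Lemma~\ref{UpdateSwell}; and (c) every resultant non-vanishing condition used in (a)--(b) survives specialization, by Lemma~\ref{LESPRES}. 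Verifying that the $F$ actually returned along each recursion path carries exactly the factors these three points demand --- and that no branch slips past the accounting --- is the delicate, though essentially mechanical, heart of the argument; once it is in place, the assertions that $(\mathbb{H},\mathbb{G})$ is a WRSD and that it is stable at every $a\notin\V^{U}(F)$ assemble directly from the induction hypothesis and Definitions~\ref{Dewrsd}--\ref{Deprsd}.
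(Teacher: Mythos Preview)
Your proposal is correct and follows essentially the same approach as the paper: both hinge on Lemma~\ref{LEzerode} for the splitting of $\V_{\overline{K(U)}}(\mathbf{T}\cup\{P\})$ and $\V_{\overline{K(U)}}(\mathbf{T}\setminus P)$ via the regular subresultant chain, on Lemma~\ref{LEhuantongtai} to transfer that splitting to specializations, and on Lemmas~\ref{UpdateSwell}, \ref{LESPRES}, \ref{LERED} for the stability bookkeeping. The only notable difference is the induction packaging---the paper argues termination by analogy with the RSD algorithm and then proves correctness by induction on the recursive depth~$h$, whereas you establish both simultaneously via a double induction on $(n,\deg(T_n,x_n))$ and give a more explicit argument for the \texttt{while}-loop termination; the technical content is otherwise the same.
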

\begin{proof}
The termination is similar as the termination of Algorithm RSD in \cite{zjzi,zjziii}. For a given zero-dimensional regular chain ${\T}=\{T_1, \ldots, T_n\}$ in $K[U][X]$ and a polynomial $P$ in $K[U][X]$,  let ${\APRSD}({\T}, P, X)=[\mathbb{H},  \mathbb{G}, F]$. Now we prove the correctness by induction on the recursive depth $h$ of ${\APRSD}({\T}, P, X)$. Note that we only need to prove that $\mathbb{H},  \mathbb{G}$ and $F$ satisfy the conditions stated in Definitions \ref{Dewrsd} and \ref{Deprsd}.

 When $h=1$,  the conclusion follows from Lemma \ref{UpdateSwell} and Lemma \ref{LESPRES}. Assume that the conclusion holds when $h<N$ $(N>1)$. Suppose $h=N$. Then ${\APRSD}({\T}, P, X)$ can return at Line 3, Line 12, or Line 41.   If ${\APRSD}({\T}, P, X)$ returns at Line 3, the conclusion follows from the induction hypothesis and Lemma \ref{LERED}. If ${\APRSD}({\T}, P, X)$ returns at Line 12,  the conclusion follows from the induction hypothesis. Now we prove the conclusion when ${\APRSD}({\T}, P, X)$ returns at Line 41, which means $P$ is reduced {\it w.r.t.} $\T$,  $\mvar{P}=x_n$ and $\res{P, {\T}}=0$. Suppose $S_{\mu+1},  S_\mu, \ldots, S_1, S_0$ is the subresultant chain of $T_n$ and $P$ {\it w.r.t.} $x_n$ in $K[U, X_{n-1}][x_n]$ where $X_{n-1}=X\backslash \{x_n\}$ and $S_{d_\upsilon}, \ldots,  S_{d_1}, S_{d_0}$ is the associated regular subresultant chain. Note that $\deg(T_n, x_n)>\deg(P, x_n)>0$ since $P$ is reduced {\it w.r.t.} $\T$.

  If $n=1$,  $S_{d_1}$ is the greatest common divisor of $T_1$ and $P$ in $K[U][x_1]$ and hence ${\V}_{\overline{K(U)}}(\{T_1, P\})\\={\V}_{\overline{K(U)}}(\{S_{d_1}\})$.  Then condition (1)  in Definition \ref{Dewrsd} holds. Suppose $Q=\pquo{T_1,S_{d_1},x_1}$. Remark that $\deg(Q, x_1)>0$ and there exists a positive integer $k$  such that $k\geq 2$  and ${\rm I}_{S_{d_1}}^kT_1=S_{d_1}Q$. Thus ${\V}_{\overline{K(U)}}({\T})={\V}_{\overline{K(U)}}(S_{d_1}Q)$. Note that ${\V}_{\overline{K(U)}}(S_{d_1})\subset {\V}_{\overline{K(U)}}(P)$. So ${\V}_{\overline{K(U)}}({\T}\backslash P)={\V}_{\overline{K(U)}}(S_{d_1}Q\backslash P)={\V}_{\overline{K(U)}}(Q\backslash P)$. Therefore  condition (2) in Definition \ref{Dewrsd} follows from the induction hypothesis.
  Remark that $\I{S_{d_1}}$ is a factor of $\I{Q}$ and according to Algorithm \ref{ALPRSD},  $\I{Q}$ is a factor of $F$. Thus
   for any $a\in \overline{K}^d\backslash {\V}^{U}(F)$, $\I{T_1}(a)\neq 0$  and $\deg(P(a), x_1)\geq d_1>0$  by the definition of subresultant. Obviously,  condition (1) in Definition \ref{Deprsd} holds. Besides, according to Lemma \ref{LEhuantongtai},  $S_{d_1}(a)$ is the great common divisor of $P(a)$ and $T_1(a)$ in $\overline{K}[x_1]$. Thus ${\V}({\T}(a)\cup P(a))={\V}(S_{d_1}(a))$ and  condition (2) in Definition \ref{Deprsd} holds.  Since  ${{\rm I}_{S_{d_1}}(a)}^kT_1(a)=S_{d_1}(a)Q(a)$ and ${\V}(S_{d_1}(a))\subset {\V}(P(a))$,  ${\V}({\T}(a)\backslash P(a))={\V}(S_{d_1}(a)Q(a)\backslash P(a))={\V}(Q(a)\backslash P(a))$. Therefore  condition (3) in Definition \ref{Deprsd} follows from the induction hypothesis.

 If $n>1$,  let $S_{d_{\upsilon+1}}=S_{\mu+1}$ and  ${\T}_{n-1}=\{T_1, \ldots, T_{n-1}\}$.
 Suppose $R_{d_i}$ is the principal subresultant  coefficient of $T_n$ and $P$ {\it w.r.t.} $x_n$ for any $i$ ($0\leq i\leq \upsilon+1$) and assume that $\mathbb{H}_0={\APRSD}({\T}_{n-1}, S_{d_0}, X_{n-1})_1$ and $\mathbb{G}_0={\APRSD}({\T}_{n-1}, S_{d_0}, X_{n-1})_2$. Remark that $\mathbb{H}_0\neq \emptyset$ because $\res{R_{d_0}, {\T}_{n-1}}=\res{P, {\T}}=0$.  For any $i$ $(1\leq i)$,  let $\mathbb{H}_i=\cup_{{\bf H}\in \mathbb{H}_{i-1}}{\APRSD}({\bf H}, R_{d_i}, X_{n-1})_1$ and $\mathbb{G}_i=\cup_{{\bf H}\in \mathbb{H}_{i-1}}{\APRSD}({\bf H}, R_{d_i}, X_{n-1})_2$ until  there exists an integer $l$ $(1\leq l\leq \upsilon+1)$ such that $\mathbb{H}_l=\emptyset$. That means $\mathbb{H}_l=\emptyset$ and $\mathbb{H}_j\neq \emptyset$ for any $j$ $(0\leq j<l)$. We can always get this integer $l$  owing to the fact that $S_{d_{\upsilon+1}}=T_n$.   Then we have two sequences $\mathbb{H}_0,\mathbb{H}_1,\ldots,\mathbb{H}_l$ and   $\mathbb{G}_0,\mathbb{G}_1,\ldots,\mathbb{G}_l$.
 Let $L_1=\{i|1\leq i\leq l, \mathbb{G}_i\neq \emptyset\}$.
 According to Algorithm \ref{ALPRSD}, the first output of ${\APRSD}({\T},P,X)$ is
 $\mathbb{H}=\cup_{i\in L_1}\cup_{{\bf G}\in {\mathbb G}_i}({\bf G}\cup \{S_{d_i}\})$. It is not difficult to see that $\mathbb{H}$ is a finite set of zero-dimensional regular chains in $K[U][X]$.
 By Lemma \ref{LEzerode}(1), we know that
${\V}_{\overline{K(U)}}({\T}\cup \{P\})
=\cup_{i=1}^{\upsilon+1}({\V}_{\overline{K(U)}}({\T}_{n-1})\cap {\V}_{\overline{K(U)}} (\{S_{d_i},R_{d_{i-1}},\ldots,R_{d_0}\}\backslash \I{T_n}R_{d_i}))$.
For any $i$ $(1\leq i\leq \upsilon+1)$, If $i\in L_1$, according to the induction hypothesis and the construction of ${\mathbb G}_i$, we get
${\V}_{\overline{K(U)}}({\T}_{n-1})\cap {\V}_{\overline{K(U)}} (\{S_{d_i},R_{d_{i-1}},\ldots,R_{d_0}\}\backslash \I{T_n}R_{d_i})=\cup_{{\bf G}\in {\mathbb G}_{i}}{\V}_{\overline{K(U)}}({\bf G}\cup \{S_{d_i}\})$.
If $l<i\leq \upsilon+1$, according to the induction hypothesis and ${\mathbb H}_{l}=\emptyset$, similarly, we know that
${\V}_{\overline{K(U)}}({\T}_{n-1})\cap {\V}_{\overline{K(U)}} (\{S_{d_i},R_{d_{i-1}},\ldots,R_{d_0}\}\backslash \I{T_n}R_{d_i})=\emptyset$.
If $1\leq i\leq l$ and $i\not\in L_1$, similarly, we get$
{\V}_{\overline{K(U)}}({\T}_{n-1})\cap {\V}_{\overline{K(U)}} (\{S_{d_i},R_{d_{i-1}},\ldots,R_{d_0}\}\backslash \I{T_n}R_{d_i})=\emptyset$.
Therefore, ${\V}_{\overline{K(U)}}({\T}\cup \{P\})=\cup_{i\in L_1}\cup_{{\bf G}\in {\mathbb G}_i}{\V}_{\overline{K(U)}}({\bf G}\cup \{S_{d_i}\})=\cup_{{\bf H}\in \mathbb{H}}{\V}_{\overline{K(U)}}({\bf H})$ and hence  condition (1) in Definition \ref{Dewrsd} holds. Furthermore, as discussed above, we figure out that ${\V}_{\overline{K(U)}}({\T}\cup \{P\})=\emptyset$ if and only if $\mathbb{H}=\emptyset$. Actually, when $\res{P,{\T}}=0$, ${\V}_{\overline{K(U)}}({\T}\cup \{P\})$ cannot be $\emptyset$ according to Proposition \ref{PRRCP2}(3) and thus $\mathbb{H}\neq \emptyset$.
Similarly, we can prove that  condition (2) in Definition \ref{Dewrsd} holds on the basis of Lemma \ref{LEzerode}(2). Besides, it also can be shown that ${\V}_{\overline{K(U)}}({\T}\backslash P)=\emptyset$ if and only if $\mathbb{G}=\emptyset$.

For any $a\in \overline{K}^d\backslash {\V}^U(F)$, $\T$ specializes well at $a$ by Line 1 and Lemma \ref{UpdateSwell} and thus  condition (1) in Definition \ref{Deprsd} holds. It is also easy to check that ${\bf H}$ specializes well at $a$ for any ${\bf H}\in \mathbb{H}$ by the induction hypothesis
 and we only need to prove that ${\V}({\T}(a)\cup \{P(a)\})=\cup_{{\bf H}\in \mathbb{H}}{\V}({\bf H}(a))$. If $\deg(P(a),x_n)=0$,  it is easy to see that ${\V}({\T}(a)\cup \{P(a)\})=\cup_{{\bf H}\in \mathbb{H}}{\V}({\bf H}(a))$.
 If $\deg(P(a),x_n)>0$,  it is reasonable to assume that the subresultant chain of $T_n(a)$ and $P(a)$ {\it w.r.t.} $x_n$ is $\widetilde{S}_{\mu+1},\widetilde{S}_{\mu},\ldots,\widetilde{S}_0$. By Lemma \ref{LEhuantongtai}, we know that $S_i(a)={\I{T_n}(a)}^{r_i}\widetilde{S_i}$ where $r_i$ is a non-negative integer for any $i$ $(0\leq i\leq \mu+1)$. Suppose $L_2=\{i|1\leq i\leq \upsilon+1, R_{d_i}(a)\neq 0\}$. It is not difficulty to check that $L_1\subset L_2$ by the induction hypothesis and it is reasonable to assume that $L_2=\{j_1,\ldots,j_k,j_{k+1}\}$ $(k\geq 1)$ such that $0<d_{j_1}<\ldots<d_{j_k}<d_{j_{k+1}}=d_{\upsilon+1}$. Then $\widetilde{S}_{0},\widetilde{S}_{d_{j_1}},\ldots,\widetilde{S}_{d_{j_k}}$ is  the regular subresultant chain of $T_n(a)$ and $P(a)$ {\it w.r.t.} $x_n$.  By Remark \ref{LEzerode}(1), ${\V}({\T}(a)\cup \{P(a)\})=\cup_{j_t\in L_2}({\V}({\T}_{n-1}(a))\cap {\V}(\{\widetilde{S}_{d_{j_t}},\widetilde{R}_{d_{j_{t-1}}},\ldots,\widetilde{R}_{d_0}\}\backslash \I{T_n(a)}\widetilde{R}_{d_{j_t}}))$.
For any $j_t\in L_2$, if $j_t\in L_1$,
then by the induction hypothesis,
${\V}({\T}_{n-1}(a))\cap {\V}(\{\widetilde{S}_{d_{j_t}},\widetilde{R}_{d_{j_{t-1}}},\ldots,\widetilde{R}_{d_{j_1}},\widetilde{R}_{d_0}\}\backslash \I{T_n(a)}\widetilde{R}_{d_{j_t}})=\cup_{{\bf G}\in {\mathbb G}_{j_t}}{\V}({\bf G}(a)\cup \{S_{d_{j_t}}(a)\})$.
For any $j_t\in L_2\backslash L_1$, if $j_t\leq l$, by the induction hypothesis, $\cup_{{\bf H}\in {\mathbb H}_{{j_t}-1}}{\V}({\bf H}(a)\backslash R_{d_{j_t}}(a))=\emptyset$ since ${\mathbb G}_{j_t}=\emptyset$.
Then ${\V}({\T}_{n-1}(a))\cap {\V}(\{\widetilde{S}_{d_{j_t}},\widetilde{R}_{d_{j_{t-1}}},\ldots,\widetilde{R}_{d_0}\}\backslash \I{T_n(a)}\\\widetilde{R}_{d_{j_t}})
=\emptyset$.
If $j_t>l$, by the induction hypothesis, $\cup_{{\bf H}\in {\mathbb H}_{l-1}}{\V}({\bf H}(a)\cup  \{R_{d_l}(a)\})=\emptyset$ since ${\mathbb H}_{l}=\emptyset$.
Then ${\V}({\T}_{n-1}(a))\cap {\V}(\{\widetilde{S}_{d_{j_t}},\widetilde{R}_{d_{j_{t-1}}},\ldots,\widetilde{R}_{d_0}\}\backslash \I{T_n(a)}\widetilde{R}_{d_{j_t}})=\emptyset$.
Therefore, ${\V}({\T}(a)\cup \{P(a)\})=\cup_{j_t\in L_1}\cup_{{\bf G}\in \mathbb{G}_{j_t}}{\V}({\bf G}(a)\cup \{S_{j_t}(a)\})=\cup_{{\bf H}\in \mathbb{H}}{\V}({\bf H}(a))$ and hence  condition (2) in Definition \ref{Deprsd} holds. Similarly,  we can check that  condition (3) in Definition \ref{Deprsd} holds by Remark \ref{LEzerode}(2).
\end{proof}

\begin{remark}
 For a WRSD $(\mathbb{H},\mathbb{G})$ of a given zero-dimensional regular $\T$ {\it w.r.t.} a given polynomial $P$ in $K[U][X]$, consider the set $S=\{a\in \overline{K}^n|$ the WRSD $(\mathbb{H},\mathbb{G})$ of $\T$ {\it w.r.t.} $P$ is stable at $a\}$. Obviously, if $F$ is the third output of ${\APRSD}({\T}, P, X)$, then ${\V}^{U}(F)\subset S$. But we cannot prove that $S\subset {\V}^{U}(F)$ and we do not know how to compute $S$ or any set $S_1$ such that ${\V}^{U}(F)\subsetneq S_1\subset S$ efficiently.  However, it may demand huge amount extra computation to enlarge the set  ${\V}^{U}(F)$ slightly. It is interesting to develop algorithms for computing $S$ efficiently in the future.
\end{remark}

\subsection{Computing RDU Varieties}

We present the main result of this paper in this section.  Algorithm \ref{ALsus} shows how to compute a generic regular decomposition and the associated RDU variety  of a given generic zero-dimensional system\footnotemark\footnotetext{Whether a given system is generic zero-dimensional can be checked by  Algorithm \ref{ALsus} itself. Hence, we assume that the input system of Algorithm \ref{ALsus} is always generic zero-dimensional.} simultaneously, in which Algorithm \ref{ALZDToRC} plays a key role.
\begin{theorem}\label{THZDToRC}
Algorithm \ref{ALZDToRC} terminates correctly.
\end{theorem}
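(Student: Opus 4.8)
The plan is to mirror the proof of Theorem~\ref{THwrsd}: first establish termination via a well-founded measure that strictly decreases at every recursive call, and then establish correctness by induction on the recursion depth, reducing each step to ingredients already at our disposal --- the weakly relatively simplicial decomposition routine $\APRSD$ (Theorem~\ref{THwrsd}), Wu's well-ordering principle (Theorem~\ref{wellorder}), the subresultant lemmas (Lemmas~\ref{LEhuantongtai}, \ref{LEzerode}, \ref{LESPRES}, \ref{LERED}), and the basic properties of regular chains (Propositions~\ref{PRRCP1}, \ref{PRRCP2} and Lemma~\ref{UpdateSwell}). For termination I would note that a recursive call is triggered either by a Wu-type step --- replacing the current zero-dimensional ascending chain (characteristic set) ${\bf C}$ by a characteristic set of a system that contains ${\bf C}$ together with one extra polynomial reduced w.r.t.\ a proper initial segment of ${\bf C}$, e.g.\ an initial $\I{C_i}$ or a tail of some $C_i$, which strictly lowers the rank of ${\bf C}$ in the usual well-ordering on ascending chains --- or by a call to $\APRSD$, which halts by Theorem~\ref{THwrsd} and afterwards leaves only the finitely many variable levels $x_1,\ldots,x_n$ still to be processed. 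A lexicographic combination of the rank of the current chain with the number of unprocessed levels is then a well-founded measure that decreases at each call, so Algorithm~\ref{ALZDToRC} terminates.

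For the decomposition identity I would apply Theorem~\ref{wellorder} to the input $\P$, writing $\V_{\overline{K(U)}}(\P)$ as the union of $\V_{\overline{K(U)}}({\bf C}\backslash\I{{\bf C}})$ and of the sets $\V_{\overline{K(U)}}(\P\cup{\bf C}\cup\{\I{C_i}\})$; each of the latter has zero-dimensional solution set, so the recursion stays within the class of generic zero-dimensional systems and handles those branches. For the principal piece $\V_{\overline{K(U)}}({\bf C}\backslash\I{{\bf C}})$ I would convert ${\bf C}$ level by level into a family of zero-dimensional regular chains: $\{C_1\}$ is already a regular chain in $K[U][x_1]$, and inductively, given a regular chain ${\bf T}'$ in $K[U][x_1,\ldots,x_{i-1}]$ reproducing the relevant locus, I would call $\APRSD({\bf T}',\I{C_i},\{x_1,\ldots,x_{i-1}\})$; on the branches of $\mathbb{G}$ the initial $\I{C_i}$ is invertible, so adjoining $C_i$ (or a successive pseudo-remainder of it) again produces a zero-dimensional regular chain with $\res{\I{C_i},{\bf T}'}\neq0$, whereas on the branches of $\mathbb{H}$, where $\I{C_i}$ vanishes, one re-enters the recursion with $C_i$ replaced by its tail. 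Splicing these identities together --- via Definition~\ref{Dewrsd} for the WRSD outputs, Proposition~\ref{PRRCP2} and Lemma~\ref{LERED} for the pseudo-divisions, and Lemma~\ref{LEzerode} together with its $\overline{K}$-analogue wherever a subresultant split occurs inside $\APRSD$ --- yields $\V_{\overline{K(U)}}(\P)=\bigcup_{{\bf T}\in\mathbb{T}}\V_{\overline{K(U)}}({\bf T}\backslash\I{{\bf T}})$ with every ${\bf T}\in\mathbb{T}$ a zero-dimensional regular chain, i.e.\ $\mathbb{T}$ is a parametric regular decomposition of $\P$ (Definition~\ref{revise3}), non-redundancy being immediate from Proposition~\ref{PRRCP1} if it is asserted.

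For stability I would argue that the polynomial $F\in K[U]$ accumulated by the algorithm gathers all of the non-degeneracy conditions encountered in the recursion: the factors $\res{\I{{\bf T}'},{\bf T}'}$ making the intermediate regular chains specialize well (Lemma~\ref{UpdateSwell}), the third outputs of the $\APRSD$ calls making each WRSD stable (Definition~\ref{Deprsd}), the resultants and initials from Lemmas~\ref{LESPRES} and \ref{LERED} making successive resultants and pseudo-divisions commute with the evaluation $\phi_a$, and the leading coefficients forcing the relevant subresultant chains not to degenerate under $\phi_a$ (Lemma~\ref{LEhuantongtai}). For any $a\in\overline{K}^d\backslash\V^{U}(F)$ I would then rerun the same induction to show that $\phi_a$ carries each identity over $\overline{K(U)}$ to the corresponding identity over $\overline{K}$: in particular ${\bf C}(a)$ is a characteristic set of $\P(a)$ so that the Theorem~\ref{wellorder} splitting specializes, each output chain ${\bf T}$ specializes well (hence $\rank{{\bf T}}=\rank{{\bf T}(a)}$), and $\V(\P(a))=\bigcup_{{\bf T}\in\mathbb{T}}\V({\bf T}(a)\backslash\I{{\bf T}(a)})$. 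Since each factor of $F$ is a resultant or an initial that is nonzero exactly because the relevant chain is regular or the relevant polynomial is reduced, $F\neq0$, hence $\dim\V^{U}(F)<d$, so $\V^{U}(F)$ is a legitimate RDU variety and $\mathbb{T}$ a generic regular decomposition of $\P$ in the sense of Definition~\ref{DEsus}.

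The hard part, I expect, is not any individual step but the global bookkeeping behind this specialization argument: one has to verify that the \emph{single} polynomial $F$ returned by the algorithm really does control \emph{all} of the finitely many non-degeneracy conditions arising across every branch of a doubly nested recursion --- the Wu-type layer and, inside it, the $\APRSD$/subresultant layer --- so that every identity over $\overline{K(U)}$ descends under $\phi_a$ for \emph{every} $a\notin\V^{U}(F)$, and, equally, that $F$ is never multiplied by the zero polynomial, so that $\V^{U}(F)$ stays a proper subvariety. Getting the specialization of the well-ordering step itself right --- namely $\phi_a(\prem{\P,{\bf C}})=\prem{\phi_a(\P),{\bf C}(a)}$ under non-vanishing of the initials at $a$ --- and carrying the rank-stability clause $\rank{{\bf T}}=\rank{{\bf T}(a)}$ through all of the recombinations is where I would expect most of the care to be needed.
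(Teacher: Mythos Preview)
Your proposal targets the wrong algorithm. Algorithm~\ref{ALZDToRC} (\texttt{ZDToRC}) does not invoke Wu's method, has no characteristic-set computation, and takes as input a \emph{triangular set} $\T=\{T_1,\ldots,T_n\}$ with $\mvar{\T}=X$, not a polynomial system $\P$. There is no ``Wu-type step,'' no appeal to Theorem~\ref{wellorder}, and no branch where one ``re-enters the recursion with $C_i$ replaced by its tail.'' You appear to be describing Algorithm~\ref{ALsus} (\texttt{RDUForZD}), whose correctness is Theorem~4, not Theorem~\ref{THZDToRC}.

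What \texttt{ZDToRC} actually does is much simpler: if $\T$ is already regular it returns $[\{\T\},\res{\I{\T},\T}]$; otherwise it locates the least $k$ with $\T_k=\{T_1,\ldots,T_k\}$ regular but $\T_{k+1}$ not, calls $\APRSD(\T_k,\I{T_{k+1}},\{x_1,\ldots,x_k\})$, \emph{discards} the $\mathbb{H}$-branches entirely (they are exactly where $\I{T_{k+1}}$ vanishes, hence inside $\V_{\overline{K(U)}}(\I{\T})$ and therefore irrelevant to $\V_{\overline{K(U)}}(\T\backslash\I{\T})$), and recurses on each ${\bf R}\cup\{T_{k+1},\ldots,T_n\}$ with ${\bf R}\in W_2$. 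The paper's termination argument is accordingly elementary: for any such recursive input the new minimal failure index $k_2$ strictly exceeds $k$ (since ${\bf R}\cup\{T_{k+1}\}$ is regular by construction of $W_2$), so a putative infinite chain of calls would produce an unbounded increasing sequence $k=k_1<k_2<\cdots$ in $\{1,\ldots,n\}$, which is impossible. Correctness is then a short induction on the recursion depth, using only Lemma~\ref{UpdateSwell} for the base case and the specification of $\APRSD$ (Theorem~\ref{THwrsd}, Definitions~\ref{Dewrsd}--\ref{Deprsd}) for the inductive step; none of Theorem~\ref{wellorder}, Lemma~\ref{LEzerode}, Lemma~\ref{LESPRES}, or Lemma~\ref{LERED} is needed here.
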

\begin{proof}
      If the input ${\bf T}$ is a regular chain,  then the termination holds obviously and the correctness follows from Lemma \ref{UpdateSwell}. Now we assume that $\T$ is not a regular chain and let $k$ be the minimal integer $k$ $(1\leq k<n)$ such that ${\T}_{k}=\{T_1, \ldots, T_k\}$ is a regular chain and $\{T_1, \ldots, T_{k+1}\}$ is not. Remark that  this assumption is reasonable owing to the fact that at least $\{T_1\}$ is a regular chain in $K[U][X]$. Let ${\T}_{n-k}=\{T_{k+1},\ldots,T_n\}$.
 Assume that ${\tt ZDToRC}({\T},X)$ does not terminate.  Then we can get at least one regular chain ${\bf R}\in {\APRSD}({\T}_{k},\I{T_{k+1}},\{x_1,\ldots,x_k\})_2$ such that ${\tt ZDToRC}({\bf R}\cup {\T}_{n-k},X)$ cannot terminate. According to Algorithm \ref{ALZDToRC}, there exists $k_2$ ($1\leq k_2<n$) such that $k_2$ is the minimal integer such that ${\bf R}\cup \{T_{k+1},\ldots,T_{k_2}\}$ is a regular chain but ${\bf R}\cup\{T_{k+1},\ldots,T_{k_2+1}\}$ is not. It is easy to check that $k_2>k$. Since ${\tt ZDToRC}({\bf R}\cup {\T}_{n-k},X)$ does not terminate, the rest can be done in the same manner and we can get an infinite sequence of positive integers $k=k_1<k_2<\ldots<k_t<\ldots$. Note that all positive integers in this infinite sequence must be no more than $n$ and it is impossible obviously. Therefore,  Algorithm \ref{ALZDToRC} terminates.  Then it is not difficult to prove the correctness by induction on the recursive depth $h$.
\end{proof}
\begin{algorithm}\label{ALZDToRC}
\SetAlgoCaptionSeparator{. }
\caption{\tt{ZDToRC}}
\DontPrintSemicolon
\KwIn{A triangular set ${\bf T}=\{T_1, \ldots, T_n\}$ in $K[U][X]$ satisfying $\mvar{{\T}}=X$, variables $X=\{x_1, \ldots, x_n\}$. }
\KwOut{[$\mathbb{G}$, $F$], where
             $\mathbb{G}$  is a  finite set  of zero-dimensional regular chains in $K[U][X]$ such that  ${\V}_{\overline{K(U)}}({\bf T}\backslash \I{{\T}})=\cup_{{\bf G}\in \mathbb{G}}{\V}_{\overline{K(U)}}({\bf G})$,
            and $F$ is a polynomial in $K[U]$ such that for any $a\in \overline{K}^{d}\backslash {\V}^{U}(F)$, ${\V}({\T}(a)\backslash \I{{\T}}(a))=\cup _{{\bf G}\in \mathbb{G}}{\V}({\bf G}(a))$ and ${\bf G}$ specializes well at $a$ for any ${\bf G}\in \mathbb{G}$ if $\mathbb{G}\neq \emptyset$.}
            \eIf{$\T$ is a regular chain}{return [$\{\T\}$, $\res{\I{{\T}}, {\T}}$]\; }{Find the minimal integer $k$ $(1\leq k<n)$ such that ${\T}_{k}=\{T_1, \ldots, T_{k}\}$ is a regular chain and  ${\T}_{k+1}=\{T_1, \ldots, T_{k},T_{k+1}\}$ is not a regular chain\; \label{ZDToRCyjminimal}}
            $W$: =$\tt{WRSD}$$({\T}_{k}, \I{{T}_{k+1}}, \{x_1,\ldots,x_k\})$\;
            \If{$W_2=\emptyset$}{return [$\emptyset, W_3$]}
            $F$: =$W_3$, $\mathbb{G}$: =$\emptyset$\;
            \For{${\bf T}$ in $W_2$}
         {${\bf R}$: =$\{\op{{\bf T}}, T_{k+1}, \ldots, T_n\}$\;
        $\mathbb{G}$: =$\mathbb{G}\cup $$\tt{ZDToRC}$$({\bf R},X)_1$,
       $F$: =$F\cdot $$\tt{ZDToRC}$$({\bf R},X)_2$\; }
      return [$\mathbb{G}$, $F$]\;
\end{algorithm}

\begin{lemma}\label{LESPZERO}
Suppose ${\V}_{\overline{K(U)}}({\T})\subset {\V}_{\overline{K(U)}}(\P)$ for a zero-dimensional regular chain $\T$ in $K[U][X]$ and a system ${\P}\subset K[U][X]$. Then ${\V}({\T}(a))\subset {\V}({\P}(a))$ for any $a\in \overline{K}^d\backslash {\V}^U(\res{\I{\T},{\T}})$.
\end{lemma}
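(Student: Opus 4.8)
The plan is to convert the containment of zero sets over $\overline{K(U)}$ into an ideal membership in $K[U][X]$, specialize that membership at $a$, and then exploit the fact that a zero-dimensional regular chain has non-vanishing initials along its own zero set.

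First I would record an algebraic consequence of the hypothesis. For every $P\in\P$ we have $\V_{\overline{K(U)}}({\T}\backslash\I{\T})\subseteq\V_{\overline{K(U)}}({\T})\subseteq\V_{\overline{K(U)}}(P)$, so Proposition~\ref{PRRCP2}(2) gives $P\in\sqrt{\SAT{\T}_{K[U][X]}}$. Writing ${\T}=\{T_1,\dots,T_n\}$, this means there are positive integers $m,s$ and polynomials $A_1,\dots,A_n\in K[U][X]$ with $\I{\T}^{\,s}\,P^{m}=\sum_{i=1}^{n}A_iT_i$.

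Now fix $a\in\overline{K}^{d}\backslash\V^{U}(\res{\I{\T},{\T}})$. By Lemma~\ref{UpdateSwell}, ${\T}$ specializes well at $a$, so ${\T}(a)$ is a regular chain in $\overline{K}[X]$ with $\rank{{\T}(a)}=\rank{{\T}}$; since $\mvar{{\T}}=X$, this forces $\mvar{{\T}(a)}=X$, so ${\T}(a)$ is a \emph{zero-dimensional} regular chain, and the equality of ranks gives $\deg(T_i(a),x_i)=\deg(T_i,x_i)$ for each $i$, hence $\I{{\T}}(a)=\I{{\T}(a)}$. Applying the evaluation homomorphism $\phi_a$ to the identity above yields $\I{\T}(a)^{\,s}\,P(a)^{m}=\sum_{i=1}^{n}A_i(a)T_i(a)\in\ideal{{\T}(a)}_{\overline{K}[X]}$. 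Take any $p\in\V({\T}(a))$: the left-hand side vanishes at $p$, and $\I{{\T}}(a)(p)\neq0$, because $\I{T_1(a)}$ is a nonzero constant and, for $i>1$, the regular-chain inequality $\res{\I{T_i(a)},\{T_{i-1}(a),\dots,T_1(a)\}}\neq0$ together with Remark~\ref{REPRRCP2}(3), applied to the zero-dimensional regular chain $\{T_1(a),\dots,T_{i-1}(a)\}$, shows that $\I{T_i(a)}$ does not vanish at any point of $\V(\{T_1(a),\dots,T_{i-1}(a)\})$, in particular not at $(p_1,\dots,p_{i-1})$. Hence $P(a)(p)^{m}=0$, so $P(a)(p)=0$; as $P\in\P$ was arbitrary, $p\in\V(\P(a))$, which is the desired inclusion $\V({\T}(a))\subseteq\V(\P(a))$.

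The only delicate point — and the step I would be most careful about — is the non-vanishing of $\I{{\T}(a)}$ on $\V({\T}(a))$. This is precisely where discarding $\V^{U}(\res{\I{\T},{\T}})$ is essential: Lemma~\ref{UpdateSwell} then guarantees that ${\T}(a)$ is again a zero-dimensional regular chain with unchanged ranks, which is what lets one propagate the recursive resultant inequalities defining a regular chain into Remark~\ref{REPRRCP2}(3). Everything else is routine bookkeeping: Proposition~\ref{PRRCP2}(2) for the ideal membership, and the fact that a polynomial identity in $K[U][X]$ remains valid after applying the ring homomorphism $\phi_a$.
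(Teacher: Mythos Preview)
Your proof is correct and follows essentially the same route as the paper's: both use Proposition~\ref{PRRCP2}(2) to get $P\in\sqrt{\SAT{\T}_{K[U][X]}}$, turn this into an explicit identity $\I{\T}^{s}P^{m}\in\ideal{\T}$, specialize at $a$, and then use that $\T(a)$ is a zero-dimensional regular chain (via Lemma~\ref{UpdateSwell}) to ensure $\I{\T}(a)$ does not vanish on $\V(\T(a))$. The only cosmetic difference is that the paper phrases the middle step through $\V_{\overline K}(\T\backslash\I{\T})\subset\V_{\overline K}(P)$ in $\overline K^{\,d+n}$ before restricting to the fiber over $a$, whereas you specialize the algebraic identity directly; your justification of $\I{\T(a)}(p)\neq0$ via Remark~\ref{REPRRCP2}(3) on each sub-chain is more explicit than the paper's one-line invocation of $\res{\I{\T}(a),\T(a)}\neq0$.
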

\begin{proof}
By Proposition \ref{PRRCP2}(3), ${\V}_{\overline{K(U)}}({\T}\backslash \I{{\T}})={\V}_{\overline{K(U)}}({\T})$. Then ${\V}_{\overline{K(U)}}({\T}\backslash \I{{\T}})\subset {\V}_{\overline{K(U)}}({\P})$. By Proposition \ref{PRRCP2}(2), ${\P}\subset\sqrt{\SAT{{\T}}_{K[U][X]}}$ and hence for any $P\in \P$, there exists a positive integer $k$ such that $P^k\in \SAT{{\T}}_{K[U][X]}$. By Proposition \ref{PRRCP2}(1), $\prem{P^k, {\T}}=0$.  Remark that  $P^k\in K[U][X]$ and ${\T}\subset K[U][X]$, so ${\V}_{\overline{K}}({\T}\backslash \I{{\T}})\subset {\V}_{\overline{K}}(P^k)={\V}_{\overline{K}}(P)$. For any $a=(a_1,\ldots,a_d)\in \overline{K}^d\backslash {\V}^U(\res{\I{\T},{\T}})$, ${\T}(a)$ is a zero-dimensional regular chain in $\overline{K}[X]$ and $\I{\T}(a)\neq 0$ by Lemma \ref{UpdateSwell}. Hence $\res{\I{{\T}}(a),{\T}(a)}=\res{\I{{\T}(a)},{\T}(a)}\neq 0$ and thus for any $b=(b_1,\ldots,b_n)\in {\V}({\T}(a))$, $b\not\in {\V}(\I{{\T}}(a))$.  That implies $(a_1,\ldots,a_d,b_1,\\\ldots,b_n)\in {\V}_{\overline{K}}({\T}\backslash \I{{\T}})\subset {\V}_{\overline{K}}({\P})$. Therefore, $b\in \V({\P}(a))$ and then ${\V}({\T}(a))\subset \V({\P}(a))$.
\end{proof}
\begin{theorem}
Algorithm \ref{ALsus} terminates correctly.
\end{theorem}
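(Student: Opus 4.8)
The plan is to read Algorithm~\ref{ALsus} as a two-phase procedure. In the first phase it applies the well-ordering principle of Theorem~\ref{wellorder} (Wu's method) to $\P$, producing a characteristic-set decomposition $\{{\bf C}_1,\ldots,{\bf C}_m\}$ of $\P$ in $K[U][X]$; since $\P$ is generic zero-dimensional, every non-contradictory ${\bf C}_i$ is a triangular set with $\mvar{{\bf C}_i}=X$, which is exactly the admissible input of Algorithm~\ref{ALZDToRC}. In the second phase it runs Algorithm~\ref{ALZDToRC} on each such ${\bf C}_i$, obtaining $[\mathbb{G}_i,F_i]$, and it returns $\mathbb{T}:=\bigcup_i\mathbb{G}_i$ together with $\mathcal{V}:=\V^{U}(F)$, where $F$ is the product of all the $F_i$ and of all the nonzero polynomials occurring as contradictory ascending chains inside the well-ordering recursion. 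Termination is then immediate: Wu's method terminates and yields finitely many ${\bf C}_i$, and every call to Algorithm~\ref{ALZDToRC} terminates by Theorem~\ref{THZDToRC}.

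First I would settle the statements that do not involve specialization. By the specification of Algorithm~\ref{ALZDToRC}, every ${\bf G}\in\mathbb{T}$ is a zero-dimensional regular chain in $K[U][X]$. Chaining the identity $\V_{\overline{K(U)}}(\P)=\bigcup_i\V_{\overline{K(U)}}({\bf C}_i\backslash\I{{\bf C}_i})$ underlying Wu's decomposition (over the non-contradictory branches) with the identities $\V_{\overline{K(U)}}({\bf C}_i\backslash\I{{\bf C}_i})=\bigcup_{{\bf G}\in\mathbb{G}_i}\V_{\overline{K(U)}}({\bf G})$ from Algorithm~\ref{ALZDToRC}, and using $\V_{\overline{K(U)}}({\bf G})=\V_{\overline{K(U)}}({\bf G}\backslash\I{{\bf G}})$ for a zero-dimensional regular chain (Proposition~\ref{PRRCP2}(3) with $P=\I{{\bf G}}$), one gets $\V_{\overline{K(U)}}(\P)=\bigcup_{{\bf G}\in\mathbb{T}}\V_{\overline{K(U)}}({\bf G}\backslash\I{{\bf G}})$; hence $\mathbb{T}$ is a parametric regular decomposition of $\P$ (Definition~\ref{revise3}), and it is non-redundant by Proposition~\ref{PRRCP1}. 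Finally $F$ is a nonzero element of $K[U]$ --- a product of nonzero successive resultants, of the nonzero polynomials produced by Algorithm~\ref{ALPRSD}, and of the nonzero polynomials defining the contradictory chains --- so $\V^{U}(F)$ is a proper subvariety of $\overline{K}^{d}$ and $\dim\mathcal{V}<d$.

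It remains to check the specialization property of Definition~\ref{DEsus}: for every $a\in\overline{K}^{d}\backslash\mathcal{V}$, every ${\bf G}\in\mathbb{T}$ specializes well at $a$, and $\V(\P(a))=\bigcup_{{\bf G}\in\mathbb{T}}\V({\bf G}(a)\backslash\I{{\bf G}(a)})$. Since $\V^{U}(F_i)\subseteq\mathcal{V}$, the output guarantee of Algorithm~\ref{ALZDToRC} directly yields that each ${\bf G}\in\mathbb{G}_i$ specializes well at $a$ and that $\V({\bf C}_i(a)\backslash\I{{\bf C}_i}(a))=\bigcup_{{\bf G}\in\mathbb{G}_i}\V({\bf G}(a))$; and specializing well forces $\V({\bf G}(a)\backslash\I{{\bf G}(a)})=\V({\bf G}(a))$, since the initials of a zero-dimensional regular chain do not vanish on its variety. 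So it suffices to prove $\V(\P(a))=\bigcup_i\V({\bf C}_i(a)\backslash\I{{\bf C}_i}(a))$. For ``$\supseteq$'': from $\V_{\overline{K(U)}}({\bf G}\backslash\I{{\bf G}})\subseteq\V_{\overline{K(U)}}(\P)$ and Proposition~\ref{PRRCP2}(2) we get $\P\subseteq\sqrt{\SAT{{\bf G}}_{K[U][X]}}$, whence Lemma~\ref{LESPZERO} gives $\V({\bf G}(a))\subseteq\V(\P(a))$ for $a\notin\V^{U}(\res{\I{{\bf G}},{\bf G}})$, which holds because ${\bf G}$ specializes well at $a$ (Lemma~\ref{UpdateSwell}).

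The main work lies in the reverse inclusion $\V(\P(a))\subseteq\bigcup_i\V({\bf C}_i(a)\backslash\I{{\bf C}_i}(a))$, i.e.\ that the characteristic-set decomposition itself specializes at generic $a$. Here the plan is to descend the recursion tree of Wu's method after applying the evaluation homomorphism $\phi_a$. At a node with system $S$ and characteristic set ${\bf C}$ one has ${\bf C}\subseteq\ideal{S}_{K[U][X]}$, so $\phi_a({\bf C})\subseteq\ideal{\phi_a(S)}_{\overline{K}[X]}$ and hence $\V(\phi_a(S))\subseteq\V({\bf C}(a))$; consequently $\V(\phi_a(S))$ splits as $\big(\V(\phi_a(S))\backslash\V(\I{{\bf C}}(a))\big)\cup\bigcup_j\V\big(\phi_a(S)\cup\{\I{C_j}(a)\}\big)$, where the first piece is contained in $\V({\bf C}(a)\backslash\I{{\bf C}}(a))$ and, since $\phi_a({\bf C})\subseteq\ideal{\phi_a(S)}$, the sets $\V(\phi_a(S)\cup\{\I{C_j}(a)\})=\V(\phi_a(S\cup{\bf C}\cup\{\I{C_j}\}))$ are exactly the specializations of the systems attached to the children of the node. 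Iterating this over the finite tree, beginning with $S=\P$, absorbs $\V(\P(a))$ into the union of the $\V({\bf C}_i(a)\backslash\I{{\bf C}_i}(a))$ over the non-contradictory ${\bf C}_i$, provided every contradictory leaf contributes the empty set; that proviso is exactly why the contradictory polynomials are placed into $F$: a contradictory chain at a node with system $S_v$ is a single nonzero $G\in K[U]$ with $G\in\ideal{S_v}_{K[U][X]}$, so for $a\notin\V^{U}(G)$ one has $G(a)\in\ideal{\phi_a(S_v)}_{\overline{K}[X]}$ with $G(a)$ a nonzero scalar, forcing $\V(\phi_a(S_v))=\emptyset$. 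I expect this bookkeeping of the recursion tree under $\phi_a$ to be the main obstacle: one must argue precisely that the systems at the internal nodes behave coherently under $\phi_a$ (so that the splitting above is legitimate fibrewise over $\overline{K}$) and that no generic point of $\V(\P(a))$ escapes into a branch empty over $\overline{K(U)}$ but nonempty over $\overline{K}$ --- which is controlled exactly by the contradictory polynomials collected into $\mathcal{V}$. Once this is in place, combining the two inclusions with the specialize-well property yields $\V(\P(a))=\bigcup_{{\bf G}\in\mathbb{T}}\V({\bf G}(a)\backslash\I{{\bf G}(a)})$, which completes the proof.
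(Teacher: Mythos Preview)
Your proposal is correct and follows essentially the same approach as the paper's proof. Both arguments reduce termination to Theorems~\ref{wellorder} and~\ref{THZDToRC}, establish the $\overline{K(U)}$-level decomposition by chaining Wu's zero decomposition with the output specification of Algorithm~\ref{ALZDToRC}, and then verify the specialized equality $\V(\P(a))=\bigcup_{{\bf G}\in\mathbb{T}}\V({\bf G}(a))$ by (i) using ${\bf C}\subseteq\ideal{S}$ to get $\V(S(a))\subseteq\V({\bf C}(a))$ and splitting off the initials, and (ii) invoking Lemma~\ref{LESPZERO} for the reverse inclusion; contradictory leaves are killed precisely because their single polynomial is a factor of $F$.

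The only cosmetic difference is that the paper packages your ``descend the recursion tree of Wu's method under $\phi_a$'' as a formal induction on the number $m$ of ascending chains in the Wu decomposition, which makes the bookkeeping you flag as ``the main obstacle'' routine: at the top node one writes $\V(\P(a))\subseteq\V({\bf C}_1(a)\backslash\I{{\bf C}_1}(a))\cup\V\big((\P\cup{\bf C}_1\cup\{\I{{\bf C}_1}\})(a)\big)$ and applies the induction hypothesis to the child systems $\P\cup{\bf C}_1\cup\{\I{{C_1}_i}\}$, each of which has strictly fewer chains in its Wu decomposition. Your tree-descent description is the unrolled version of the same induction.
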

\begin{proof}
Since the termination follows from the termination of Algorithm \ref{ALPRSD} and Algorithm \ref{ALZDToRC}, we only need to prove the correctness.  Assume that $\P$ is a generic zero-dimensional system in $K[U][X]$ and $\{{\bf C}_1, \ldots, {\bf C}_m\}$ is a Wu's decomposition of ${\bf P}\subset K[U][X]$ computed by Wu's method. According to Wu's method and Algorithm \ref{ALZDToRC}, we know that the claim (1) in the specification of Algorithm \ref{ALsus} holds.

 Now we prove the claim (2) in the specification of Algorithm \ref{ALsus} by induction on $m$. If $m=1$,  ${\bf C}_1$ is a characteristic set of $\P$ in $K[U][X]$ and ${\bf C}_1=\{C_1\}\subset K[U]$ by Wu's method. That means ${\V}_{\overline{K(U)}}({\P})=\emptyset$. According to Algorithm \ref{ALsus}, the first output of ${\ARDU}({\P},X)$ is
$\emptyset$ and  the second output is exactly $C_1$. In fact,
for any $a\not\in {\V}^{U}(C_1)$, $C_1(a)\in \overline{K}$. Note that $C_1(a)\in \ideal{{\P}(a)}_{\overline{K}[X]}$ by $C_1\in \ideal{{\P}}_{K[U][X]}$. Thus ${\V}({\P}(a))\subset {\V}(C_1(a))=\emptyset$ and the claim (2) in the specification of Algorithm \ref{ALsus} holds.

Assume that the conclusion holds for $m<N$ $(N>1)$.  If $m=N$, suppose ${\bf C}_1=\{{C_1}_1,\ldots,{C_1}_t\}$ is the characteristic set of $\P$ computed by Wu's method. Since $m>1$, we know that ${\bf C}_1\not\subset K[U]$ and ${\V}_{\overline{K(U)}}({\P})={\V}_{\overline{K(U)}}({\bf C}_1\backslash \I{{\bf C}_1})\cup \cup_{i=1}^t{\V}_{\overline{K(U)}}({\P}\cup {\bf C}_1\cup \{\I{{C_1}_i}\})$. Let  ${\ARDU}({\P}, X)=[{\RS}, F]$ and  ${\tt{ZDToRC}}({\bf C}_1, X)=[{\RS}_1, F_1]$. Then ${\RS}_1\subset {\RS}$ and $F_1$ is a factor of $F$.  Let  ${\ARDU}({\P}\cup {\bf C}_1\cup\{\I{{C_1}_i}\},X)=[{{\RS}_2}_i, {F_2}_i]$ for every $i$ $(1\leq i\leq t)$. Then ${\RS}_1\cup \cup_{i=1}^t{{\RS}_2}_i={\RS}$ and $F=F_1\cdot \Pi_{i=1}^t{F_2}_i$ by Wu's method and Algorithm \ref{ALsus}.
We only prove the conclusion when ${\RS}_1\neq \emptyset$ and ${{\RS}_2}_i \neq \emptyset$ for every $i$ $(1\leq i\leq t)$. In fact, if ${\RS}_1=\emptyset$ or there exists $i$ $(1\leq i\leq t)$ such that ${{\RS}_2}_i=\emptyset$, the proof is similar.
For any $a\not\in {\V}^{U}(F)$, we know that $a\not\in V^{U}(F_1)$ and  $a\not\in V^{U}({F_2}_i)$ for every $i$ $(1\leq i\leq t)$. Hence for every $i$ $(1\leq i\leq t)$,
${\V}(({\P}\cup{\bf C}_1\cup \{\I{{C_1}_i}\})(a))=\cup_{{\T}\in {{\RS}_2}_i}{\V}({\T}(a))$ by the induction  hypothesis. Therefore, in order to prove ${\V}({\P}(a))=\cup_{{\T}\in{\RS}}{\V}({\T}(a))$, we only need to show
 ${\V}({\P}(a))=\cup_{\T\in{\RS}_1}{\V}({\T}(a))\cup \cup_{i=1}^t{\V}(({\P}\cup {\bf C}_1\cup \{\I{{C_1}_i}\})(a))$. Note that $\cup_{i=1}^t{\V}(({\P}\cup {\bf C}_1\cup \{\I{{C_1}_i}\})(a))={\V}(({\P}\cup {\bf C}_1\cup \{\I{{\bf C}_1}\})(a))$. So we only need to prove ${\V}({\P}(a))=\cup_{\T\in{\RS}_1}{\V}({\T}(a))\cup {\V}(({\P}\cup {\bf C}_1\cup \{\I{{\bf C}_1}\})(a))$. As a matter of fact, by Algorithm \ref{ALZDToRC}, $\cup_{\T\in{\RS}_1}{\V}({\T}(a))={\V}({\bf C}_1(a)\backslash \I{{\bf C}}_1(a))$. Note that ${\bf C}_1(a)\subset \ideal{{\P}(a)}_{\overline{K}[X]}$, so ${\V}({\P}(a))\subset {\V}({\bf C}_1(a))$. Then ${\V}({\P}(a))\subset {\V}({\bf C}_1(a)\backslash \I{{\bf C}}_1(a))\cup {\V}(({\P}\cup {\bf C}_1\cup \{\I{{\bf C}_1}\})(a))=\cup_{\T\in{\RS}_1}{\V}({\T}(a))\cup {\V}(({\P}\cup {\bf C}_1 \cup \{\I{{\bf C}_1}\})(a))$. On the other hand, by the claim (1),  ${\V}_{\overline{K(U)}}({\P})=\cup_{\T\in\RS}{\V}_{\overline{K(U)}}({\T})$ and thus for any ${\T}\in {\RS}_1\subset {\RS}$, ${\V}_{\overline{K(U)}}({\T})\subset {\V}_{\overline{K(U)}}({\P})$.  According to Algorithm \ref{ALZDToRC}, we know that $\res{\I{{\T}}, {\T}}(a)\neq 0$.  By Lemma \ref{LESPZERO}, $\cup_{{\T}\in{\RS}_1}{\V}({\T}(a))\cup {\V}(({\P}\cup{\bf C}_1\cup \{\I{{\bf C}_1}\})(a))\subset {\V}({\P}(a))$ and we are done.
\end{proof}

\begin{algorithm}\label{ALsus}
\SetAlgoCaptionSeparator{. }
\DontPrintSemicolon
\caption{\ARDU}
\KwIn{A generic zero-dimensional system ${\bf P}$ in $K[U][X]$, variables $X=\{x_1, \ldots, x_n\}$.}
\KwOut{[$\RS$, $F$], where\\

        (1)$\RS$ is a finite set of  zero-dimensional regular chains in $K[U][X]$ such that ${\V}_{\overline{K(U)}}({\P})=\cup_{{\T}\in {\RS}}{\V}_{\overline{K(U)}}({\T})$;

        (2)$F$ is a polynomial in $K[U]$ such that for any $a\in \overline{K}^d\backslash {\V}^{U}(F)$, ${\V}({\P}(a))=\cup_{{\T}\in {\RS}}{\V}({\T}(a))$ and ${\T}$ specializes well at $a$ for any $\T\in \RS$ if $ \RS\neq \emptyset$.}
     Compute a Wu's decomposition ${\mathbb S}=\{{\bf C}_1, \ldots, {\bf C}_m\}$ of $\P$ in $K[U][X]$ by Wu's method\;
     $\mathbb{S}_2$:=$\emptyset$,  $F$:=$1$\;
    \For {$i=1,\ldots,m$}{
     \eIf{${\bf C}_i$ is a contradictory ascending chain}
          { $F$:=$F\cdot \op{{\bf C}_i}$\;}
         {$\mathbb{S}_2$:=$\mathbb{S}_2\cup \{{\bf C}_i\}$\;}
     }
      \If{$\mathbb{S}_2=\emptyset$}{return [$\emptyset, F$]\;}
     $\RS$:=$\emptyset$\;
     \For{${\bf C}$ in $\mathbb{S}_2$}{
     W:=${\tt ZDToRC}({\bf C},X)$\;
      $\RS$:=${\RS}\cup W_1$,
      $F$:=$F\cdot W_2$\; }
     return [${\RS}, F$]\;
\end{algorithm}

Suppose $\V_{\overline{K(U)}}({\P})\neq \emptyset$ for a generic zero-dimensional system $\P$ in $K[U][X]$. Assume that $\mathbb{S}=\{{\bf C}_1, \ldots, {\bf C}_m\}$ is a Wu's decomposition of ${\bf P}\subset K[U][X]$ computed by Wu's method and let $\mathbb{W}=\{{\bf C}\in \mathbb{S}|\V_{\overline{K(U)}}({\bf C}\backslash\I{{\bf C}})\neq \emptyset\}$.
Note that $\mathbb{W}\neq \emptyset$ since $\V_{\overline{K(U)}}({\P})\neq \emptyset$. Obviously, ${\mathbb W}$ is a non-redundant
parametric triangular decomposition of $\P$ in $K[U][X]$. If $F={\ARDU}({\P},X)_2$,  it is obvious that the following corollary holds.
\begin{corollary}\label{CO}
The non-redundant parametric triangular decomposition $\mathbb{W}$ of $\P$ in $K[U][X]$ is stable at $a$ if $a\in \overline{K}^d\backslash {\V}^{U}(F)$.
\end{corollary}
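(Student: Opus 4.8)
The plan is to verify the two conditions Definition~\ref{DEfus} imposes on $\mathbb{W}$ at a given $a\in\overline{K}^d\backslash{\V}^U(F)$: the identity $\V(\P(a))=\cup_{{\bf C}\in\mathbb{W}}\V({\bf C}(a)\backslash\I{{\bf C}(a)})$, and the rank condition $\rank{{\bf C}}=\rank{{\bf C}(a)}$ for every ${\bf C}\in\mathbb{W}$. Write $\mathbb{G}_{{\bf C}}={\tt ZDToRC}({\bf C},X)_1$ and $F_{{\bf C}}={\tt ZDToRC}({\bf C},X)_2$ for ${\bf C}\in\mathbb{S}_2$; from Algorithm~\ref{ALsus} one reads off that each $F_{{\bf C}}$ divides $F$ and that $\RS=\cup_{{\bf C}\in\mathbb{S}_2}\mathbb{G}_{{\bf C}}$. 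A contradictory chain is a nonzero element of $K[U]$, hence has no zero in $\overline{K(U)}$, so $\mathbb{W}\subseteq\mathbb{S}_2$; and if ${\bf C}\in\mathbb{S}_2\backslash\mathbb{W}$ then the output specification of Algorithm~\ref{ALZDToRC} (established in Theorem~\ref{THZDToRC}) gives $\cup_{{\bf G}\in\mathbb{G}_{{\bf C}}}\V_{\overline{K(U)}}({\bf G})=\V_{\overline{K(U)}}({\bf C}\backslash\I{{\bf C}})=\emptyset$, whence $\mathbb{G}_{{\bf C}}=\emptyset$ by Proposition~\ref{PRRCP1}. Consequently $\RS=\cup_{{\bf C}\in\mathbb{W}}\mathbb{G}_{{\bf C}}$.

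For the variety identity I would splice together the specifications already proved. The correctness theorem for Algorithm~\ref{ALsus} gives $\V(\P(a))=\cup_{\T\in\RS}\V(\T(a))$; and, since $a\notin{\V}^U(F_{{\bf C}})$ for each ${\bf C}\in\mathbb{W}$, the specification of Algorithm~\ref{ALZDToRC} gives $\V({\bf C}(a)\backslash\I{{\bf C}}(a))=\cup_{{\bf G}\in\mathbb{G}_{{\bf C}}}\V({\bf G}(a))$ with every ${\bf G}$ specializing well at $a$. Substituting the second equality into the first yields $\V(\P(a))=\cup_{{\bf C}\in\mathbb{W}}\V({\bf C}(a)\backslash\I{{\bf C}}(a))$, which is the desired identity except that it carries the polynomial $\I{{\bf C}}(a)$ in place of the initial $\I{{\bf C}(a)}$ of the specialized chain.

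Both this discrepancy and the rank condition reduce to the single claim that $a\notin{\V}^U(\I{{\bf C}})$ for every ${\bf C}\in\mathbb{W}$. Granting it, $\I{C_i}(a)\neq 0$ as a polynomial for each $i$; since $\P$ is generic zero-dimensional we have $\mvar{{\bf C}}=X$, so $\deg(C_i(a),x_i)=\deg(C_i,x_i)$ for all $i$, which makes ${\bf C}(a)$ a triangular set with $\rank{{\bf C}(a)}=\rank{{\bf C}}$ and $\I{{\bf C}(a)}=\I{{\bf C}}(a)$, and both conditions follow. To establish the claim I would track the initials through Algorithm~\ref{ALZDToRC}: whenever a recursive call meets a regular-chain prefix ${\bf R}$, Line~1 of Algorithm~\ref{ALPRSD} together with the regular-chain branch of Algorithm~\ref{ALZDToRC} deposits a factor of $\res{\I{{\bf R}},{\bf R}}$ into $F$, and $\res{\I{{\bf R}},{\bf R}}$ in turn has a power of each successive resultant $\res{\I{R_l},\{R_1,\dots,R_{l-1}\}}$ as a factor; since, as in the termination argument of Theorem~\ref{THZDToRC}, each recursive call strictly lengthens the regular prefix, every $C_i$ is eventually covered, so $F$ carries $\res{\I{C_i},\mathcal{R}}$ as a factor for a suitable regular chain $\mathcal{R}$ with $\mvar{\mathcal{R}}=\{x_1,\dots,x_{i-1}\}$. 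If $\I{C_i}(a)=0$ as a polynomial, then every Sylvester determinant occurring in this successive resultant acquires a block of zero columns, so $\res{\I{C_i},\mathcal{R}}(a)=0$ and $a\in{\V}^U(F)$; contrapositively, $a\notin{\V}^U(F)$ forces $\I{C_i}(a)\neq 0$ for all $i$, that is, $a\notin{\V}^U(\I{{\bf C}})$. The genuinely delicate step is exactly this bookkeeping, namely following $\I{C_i}$ through the pseudo-reductions in Lines~2--3 of Algorithm~\ref{ALPRSD}, through the WRSD, and through the nested recursion of Algorithm~\ref{ALZDToRC}, so as to be sure the matching resultant really does survive as a factor of $F$; once that is in hand, everything else is a routine assembly of the specifications established earlier.
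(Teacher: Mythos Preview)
Your argument is correct. The paper gives no proof beyond declaring the corollary ``obvious'', so there is nothing to compare against at the level of technique; what you have written is precisely the kind of unpacking that the word ``obvious'' is standing in for. The two reductions you make are sound: combining the output specification of Algorithm~\ref{ALsus} with that of Algorithm~\ref{ALZDToRC} (after observing $\mathbb{G}_{\bf C}=\emptyset$ for ${\bf C}\in\mathbb{S}_2\setminus\mathbb{W}$, via Proposition~\ref{PRRCP1}) yields $\V(\P(a))=\cup_{{\bf C}\in\mathbb{W}}\V({\bf C}(a)\backslash\I{{\bf C}}(a))$, and both the $\I{{\bf C}}(a)$-versus-$\I{{\bf C}(a)}$ discrepancy and the rank condition do reduce to $\I{C_i}(a)\neq 0$ for all $i$.

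Your method for this last point is also correct. Along any branch of the recursion of Algorithm~\ref{ALZDToRC} that reaches a regular chain (and such a branch exists since $\mathbb{G}_{\bf C}\neq\emptyset$ for ${\bf C}\in\mathbb{W}$), the regular prefix strictly lengthens at each step and its tail still consists of the original $C_j$'s; hence every index $i$ eventually lies in a regular prefix $\mathcal{R}'$ with $C_i$ as its $i$th member, and Line~1 of Algorithm~\ref{ALPRSD} (or Line~2 of Algorithm~\ref{ALZDToRC} in the base case) then contributes $\res{\I{\mathcal{R}'},\mathcal{R}'}$ to $F$. Multiplicativity of resultants gives $\res{\I{C_i},\{\mathcal{R}'_1,\dots,\mathcal{R}'_{i-1}\}}$ as a factor, and the Sylvester-determinant observation (vanishing rows when $\I{C_i}(a)=0$) finishes it. The bookkeeping you flag as delicate is real but routine; it can be made precise by the same induction on recursion depth used in the termination proof of Theorem~\ref{THZDToRC}.
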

Corollary \ref{CO} indicates that we can also obtain a non-redundant parametric triangular decomposition of a given generic zero-dimensional system and the decomposition is stable at any parameter value that is not on the RDU variety computed by Algorithm \ref{ALsus}.  Following the idea presented in Algorithm\ref{ALsus}, any algorithm for computing regular chain decompositions can be probably adopted to computing generic regular decompositions and the associated RDU varieties if the algorithm is based on resultants and pseudoremainders computation. The following Example \ref{EXintro} is presented to illustrate how Algorithm \ref{ALsus} and Corollary \ref{CO} work.
\begin{example}\label{EXintro}
Consider the system
 ${\P}=
 \{ (u-1)x_2^2+(x_1^2-2ux_1+u^2+1)x_2+x_1^2-x_1,
  (x_1-u)(x_2+1),
  (x_1-u)^2\}$,
where $x_1$ and $x_2$ are variables $(x_1\prec x_2)$ and $u$ is a parameter.
\end{example}
{\em\bf Step 1:} According to Wu's method, we compute a Wu's decomposition $\mathbb{S}=\{{\bf C}_1,{\bf C}_2,{\bf C}_3\}$ of $P$ in $\mathbb{R}[u][x_1,x_2]$ where ${\bf C}_1=\{(x_1-u)^2,(x_1-u)(x_2+1)\}$, ${\bf C}_2=\{x_1-u, (u-1)x_2^2+x_2+u^2-u\}$ and ${\bf C}_3=\{u-1\}$.

{\em\bf Step 2:} Let $\mathbb{S}_2=\{{\bf C}_1,{\bf C}_2\}$ and $F={\tt op}({\bf C}_3)=u-1$.

{\em\bf Step 3:} Because $\mathbb{S}_2\neq \emptyset$, we need to execute ${\tt ZDToRC}({\bf C}_1,\{x_1,x_2\})$ and ${\tt ZDToRC}({\bf C}_2,\{x_1,x_2\})$.

$\qquad$               {\em\bf Step 3.1:} It is easy to see that ${\bf C}_1$ is not a regular chain in ${\mathbb R}[u][x_1,x_2]$. By calling ${\APRSD}(\{(x_1-u)^2\},x_1-u, \{x_1\})$, we obtain $[\{\{x_1-u\}\}, \emptyset, 1]$. Then ${\tt ZDToRC}({\bf C}_1,\{x_1,x_2\})=[\emptyset, 1]$.

$\qquad$                {\em\bf Step 3.2:}  Since ${\bf C}_2$ is a regular chain in ${\mathbb R}[u][x_1,x_2]$,
${\tt ZDToRC}({\bf C}_2,\{x_1,x_2\})=[\{{\bf C}_2\},u-1]$.

Finally, we get a generic regular decomposition $\{{\bf C}_2\}$ of $\P$ in ${\mathbb R}[u][x_1,x_2]$ and a RDU variety $\mathcal{V}=\{a\in \mathbb{C}|a-1=0\}$ such that for any $a\in \mathbb{C}\backslash \mathcal{V}$, ${\V}(P(a))={\V}({\bf C}_2(a))$ where ${\bf C}_2(a)$ is a regular chain in $\mathbb{C}[x_1,x_2]$.
Furthermore, It should be noted that Algorithm \ref{ALsus} eliminates a redundant branch ${\bf C}_1$ from $\mathbb{S}_2$.  As a result, we also get a non-redundant parametric triangular decomposition $\{{\bf C}_2\}$ of $\P$.  As Corollary \ref{CO} shows, this non-redundant decomposition is stable at any $a\in \mathbb{C}\backslash \mathcal{V}$.

As introduced in Section \ref{SecIntro}, there exist several methods based on triangular sets for solving parametric systems. Now we present an example to compare the results computed by Algorithm \ref{ALsus} and functions {\tt Triangularize} and {\tt ComprehensiveTriangularize} in {\tt RegularChains}\footnotemark\footnotetext{Please see the help documents for these two functions in Maple 16.}.
\begin{example}\label{EXcompare}
Consider the system
${\P}=\{
   u_1x_2^2+x_1^2,
   u_1x_2^2+u_1x_1x_2+x_1\}$,
where $x_1$ and $x_2$ are variables $(x_1\prec x_2)$ and $u_1$ and $u_2$ are parameters $(u_1\prec u_2)$\footnotemark\footnotetext{The order of parameters is required when calling {\tt ComprehensiveTriangularize}.  }.
\end{example}
(1)By calling
{\tt Triangularize}$({\P}, {\tt PolynomialRing}([x_2,x_1],\{u_1,u_2\}))$,
we get a set $\{{\T}_1, {\T}_2\}$ of regular chains  in ${\mathbb R}[u_1,u_2][x_1,x_2]$
such that ${\V}_{\overline{{\mathbb R}(U)}}({\P})=\cup_{i=1}^2{\V}_{\overline{{\mathbb R}(U)}}({\bf T}_i\backslash \I{{\bf T}_i})$. Since ${\T}_1$ and ${\T}_2$ are zero-dimensional, ${\V}_{\overline{{\mathbb R}(U)}}({\P})=\cup_{i=1}^2{\V}_{\overline{{\mathbb R}(U)}}({\bf T}_i)$ by Lemma \ref{PRRCP1}.

(2)By calling ${\tt RDUForZD}({\P},\{x_1,x_2\})$, we get a set $\{{\T}_1,{\T_2}\}$ of regular chains in ${\mathbb R}[u_1,u_2][x_1,x_2]$ and $F=u_1u_2(u_1^3+u_2^2)$ such that ${\V}_{\overline{{\mathbb R}(U)}}({\P})=\cup_{i=1}^2{\V}_{\overline{{\mathbb R}(U)}}({\bf T}_i)$ and for any $a\in {\mathbb C}^2\backslash {\V}^{U}(F)$, ${\V}({\P}(a))=\cup_{i=1}^2{\V}({\bf T}_i(a))$ and ${\T}_i$ specializes well at $a$ for every $i$ $(1\leq i\leq 2)$.

(3)By calling
${\tt ComprehensiveTriangularize}({\P},2,{\tt PolynomialRing}[x_2,x_1,u_2,u_1])$,
we get five triples $({\mathbb T}_i, {\bf A}_i, {\bf B}_i)$ ($1\leq i\leq 5$) where ${\mathbb T}_i$ is a set of regular chains in ${\mathbb R}[u_1,u_2,x_1,x_2]$ and ${\bf A}_i$ and ${\bf B}_i$ are sets of polynomials in ${\mathbb R}[u_1,u_2]$, namely
$[(\{{\T}_1,{\T}_2\},\emptyset,\{u_1,u_2,u_1^3+u_2^2\}),
    (\{{\T}_1,{\T}_2,{\T}_3\},\{u_1\},\\\{u_2\}),
    (\{{\T}_2,{\T}_5\},\{u_2\},\\\{u_1\}),
    (\{{\T}_2,{\T}_3\},\{u_1,u_2\},\{1\}),
    (\{{\T}_2,{\T}_4\},\{u_1^3+u_2^2\},\{u_1\})]$,
such that ${\mathbb C}^2=\cup_{i=1}^5{\V}^U({\bf A}_i\backslash {\bf B}_i)$,
${\V}^U({\bf A}_i\backslash {\bf B}_i)\cap {\V}^U({\bf A}_j\backslash {\bf B}_j)=\emptyset$ for any $i\neq j$($1\leq i,j\leq 5$) and for any $a\in {\V}^{U}({\bf A}_i\backslash {\bf B}_i)$, ${\V}({\P}(a))={\V}_{{\T}\in {\mathbb T}_i}{\T}(a)$ and $\T$ specializes well at $a$ for any $\T\in {\mathbb T}_i$.

In the above presentation, ${\T}_1=\{x_1, x_2\}$, ${\T}_2=\{ (u_1^2+u_2^3)x_1^2+2u_1^2x_1+u_1, u_2x_2+u_1x_1+1\}$, ${\T}_3=\{u_1, x_1\}$, ${\T}_4=\{u_1^3+u_2^2, 2u_1x_1+1, u_2x_2+u_1x_1+1\}$ and ${\T}_5=\{u_2, u_1x_1+1, u_1x_2^2+x_1^2\}$.

Example \ref{EXcompare} shows that for the given generic zero-dimensional system,  the regular chains decomposition computed by {\tt Triangularize} is the same as the first output of Algorithm \ref{ALsus}. In addition, Algorithm \ref{ALsus} has a second output $F$ in ${\mathbb R}[U]$ so that for any $a\in {\mathbb C}^2\backslash {\V}^{U}(F)$, the regular chains decomposition is stable.  It is also indicated that  the result computed by Algorithm \ref{ALsus} is not as complete as that computed by {\tt ComprenhensiveTriangularize} since the latter gives a full answer
  to questions (1) and (2) proposed in Section \ref{SecIntro} and Algorithm \ref{ALsus} omits the parameter values on the affine variety generated by the second output $F$.

 \begin{remark}
 Actually, it is a further idea that we can compute comprehensive triangular decompositions by calling Algorithm \ref{ALsus} step by step. For instance,
consider the system ${\P}=\{u_1 x_{1}^{2}+u_2 x_{2}+1, u_2 x_{2}^{2}+x_1\}$.
 By calling ${\tt RDUForZD}({\P}, \{x_1,x_2\})$, we get  ${\mathbb T}_1=\{[u_{1}^{2} x_{1}^{4}+2 u_{1} x_1^2+u_2 x_1+1, u_2 x_2+u_{1} x_1^2+1]\}$ and the related RDU variety ${\mathcal V}_1=\{(u_1, u_2)\in {\mathbb C}^2|u_1u_2=0\}$ in ${\mathbb C}^2$. Then let ${\P}_1={\P}\cup \{u_1u_2\}$. By calling ${\tt RDUForZD}({\P}_1, \{u_1, x_1, x_2\})$, we get ${\mathbb T}_2=\{[u_1, u_2 x_1+1,
 u_2 x_2+1]\}$ and the related RDU variety  ${\mathcal V}_2=\{(u_1, u_2)\in V_1|u_2=0\}$.  Let ${\P}_2={\P}\cup \{u_1u_2, u_2\}$. Regard $u_2$ as a new variable.
By Algorithm \ref{ALsus}, we compute a regular chain decomposition $\{1\}$ of ${\P}_2$ in ${\mathbb R}[u_2, u_1, x_1, x_2]$. Therefore ${\P}(a)$ has no solutions in
${\mathbb C}$ for all $a\in {\mathcal V}_2$. Finally, we divide ${\mathbb C}^2$ into three parts: ${\mathbb C}^2\backslash {\mathcal V}_1$, ${\mathcal V}_1 \backslash {\mathcal V}_2$ and ${\mathcal V}_2$ and over each part, we have regular chains to represent the solutions of $\P$. To give a precise description for this method,  we need to deal with generic positive-dimensional systems and consider all the parameter values without overlapping.  We will discuss this issue in the future.
\end{remark}

\section{Implementation}\label{sectionexamples}
\begin{Table}\label{rsdtable}
      \begin{center}
      Comparing {\tt WRSD} and {\tt RSD}\par
\begin{tabular}{|c|c|c|c|c|c|c|c|c|c|c|}
     \hline
         & &&\multicolumn{2}{c|}{time}&\multicolumn{2}{c|}{$\mathbb{H}$}&\multicolumn{2}{c|}{$\mathbb{G}$}
       &\multicolumn{2}{c|}{$F$}\\
     \cline{4-11}
          system &$U$& $X$  &{\tt WRSD}& {\tt RSD}&{\tt WRSD}&{\tt RSD}&{\tt WRSD}&{\tt RSD}&{\tt WRSD}&{\tt RSD}\\
     \hline
     \hline
     {\em EX1} &3&7&0.406&0.718&1&1&2&2&11&11\\
     {\em EX2} &2&6&0.390&0.515&1&1&2&2&9&9\\
     {\em EX3}&4&6&1.295&2.793&1&1&2&2&10&10\\
     {\em EX4}&3&6&4.711&4.664&1&1&1&1&11&11\\
     {\em EX5}&4&4&0.780&0.765&1&1&1&1&9&9\\
     {\em EX6}&4&4&0.546&0.546&1&1&1&1&9&7\\
     {\em EX7}&3&3&0.842&1.045&1&1&1&1&8&8\\
     {\em EX8}&3&3&1.170&1.576&1&1&1&1&7&7\\
     \hline
    \end{tabular}
 \end{center}
\end{Table}
     We have implemented Algorithm \ref{ALsus} as a function $\tt{RDUForZD}$  on the basis of DISCOVERER  [22] using Maple 16. More specifically, Wu's method for computing parametric triangular decompositions introduced in Section \ref{sfuhaoshuoming} is implemented as a function ${\tt WUSOLVE}$ and Algorithm \ref{ALPRSD} is implemented as a function ${\tt WRSD}$.  Remark that we use factorization without loss of correctness when implementing. The details are omitted.
Throughout this section,  all the results are obtained in Maple 16 using an Intel(R) Core(TM) 2 Solo processor(1. 40GHz CPU and 2GB total memory).
\begin{Table}\label{timetable}
\begin{center}
     {Comparing {\tt RDUForZD} and \tt{TRIANGULARIZE}}\par
     \begin{tabular}{|c|c|c|c|c|c|c|c|}
     \hline
      number&system &$U$&$X$&$\tt{WUSOLVE}$ & $\tt{ZDToRC}$&\tt{RDUForZD}&\tt{TRIANGULARIZE}\\
     \hline
     \hline
      1. &{\em S5}  &4&4& 0.047& 0. &0.047&0.171\\
      2. &{\em S9}  &3&3& 0.078& 0.012&0.090&0.124\\
      3. &{\em S16} &3&12& 0.078& 0.&0.078&0.530\\
      4. &{\em S18}&3&2&1.388&0.094&1.482&2.200\\
      5. &{\em SY1}&3&2&0.063&0.&0.063&0.124\\
      6. &{\em SY2} &4&1&0.047&0.&0.047&0.046\\
      7. &{\em SCC1} &3&4 &0.016&0. &0.016&0.047\\
      8. &{\em SCC2}&4&7&0.047&0.015&0.062&0.156\\
      9. &{\em SCC3}&6&11&0.312&0.281&0.593&1.217\\
      10. &{\em SCC4} &4&7 &0.172&0.093&0.265&0.406\\
      11. &{\em SCC5} &4&5&0.078&0.016&0.094&0.141\\
      12. &{\em P3P} &5&2& 0.062 & 0.016&0.078&0.078\\
      13. &{\em F4} &4&1&0.016&0.&0.016&0.063\\
      14. &{\em F6} &4&1&0.031&0.&0.031&0.047\\
      15. &{\em Gerdt}&3&4&0.078&0.&0.078&0.078\\
      16. &{\em Wang93}&2&3&0.234&0.&0.234&0.624\\
      17. &{\em Leykin-1}&4&4&0.171&0.&0.171&0.219\\
      18. &{\em Neural}&1&3&0.218&0.016&0.234&0.281\\
      19. &{\em Pavelle}&4&4&0.686&0.203&0.889&0.811\\
      20. &{\em genLinSyst-3-3}&12&3&0.062&0.&0.062&0.140\\
      21. &{\em AlkashiSinus}&3&6&0.078&0.&0.078&0.141\\
      22. &{\em LanconeLLi}&7&4&0.219&0.&0.219&0.218\\
     \hline
    \end{tabular}
\end{center}
   \end{Table}
We run 8 examples\footnotemark\footnotetext{http://www.is.pku.edu.cn/\~{}xbc/ExForRSD.txt} using $\tt{WRSD}$ and $\tt{RSD}$\footnotemark\footnotetext{ Algorithm {\rm RSD} in \cite{zjzi} was implemented as a subfunction {\tt RSD} in DISCOVERER by Xia \cite{discover}.} on the same computer with Maple 16 and the comparisons about timings and results are presented in Table \ref{rsdtable}, where columns $X$ and $U$ represent the cardinal numbers of the variables and the parameters, respectively, column $time$ reports the timings in seconds,  columns $\mathbb{H}$ and $\mathbb{G}$ represent the numbers of branches in the first and second outputs, respectively, column $F$ represents the numbers of irreducible factors over the field of rational numbers of the third output. The empirical data shows that {\tt WRSD} performs as well as {\tt RSD} with higher efficiency in most cases.

We also run several examples from the literature \cite{changbo, zxq, sun, Montes} using {\tt RDUForZD} with Maple 16 and part of the empirical data about timings is presented in Table \ref{timetable}.  In Table \ref{timetable},
       column ${\tt WUSOLVE}$ reports the timings in seconds cost by computing Wu's decompositions,
       column ${\tt ZDToRC}$ represents the timings in seconds cost by computing weakly relatively simplicial decompositions and some other steps required in Algorithm \ref{ALsus}, column $\tt{RDUForZD}$ shows the timings added by the timings in the former two columns and column $\tt{Triangularize}$ shows the timings in seconds cost by the function {\tt Triangularize}\footnotemark\footnotetext{For a given test-system $\P$, we call ${\tt Triangularize}({\P}, {\tt PolynomialRing}([x_n,\ldots,x_1],\{u_1,\ldots,u_d\}))$.} in Maple 16. It is indicated that our method can be applied to a wide range of practical problems with reasonable time cost. Furthermore, it is interesting to note that computing generic regular decompositions and the associated RDU varieties do not require much more time cost than Wu's decompositions when solving practical problems as shown in Table \ref{timetable}.
\section{Conclusions}\label{sectionconclusions}
   We give an algorithm for computing generic regular decompositions and the associated RDU varieties simultaneously for generic zero-dimensional systems in this paper.   As a result, questions (1) and (2) in Section \ref{SecIntro} are answered to some extent.  In the future, we will discuss how to modify Algorithm \ref{ALsus} for general systems and then we will answer questions (1) and (2)  completely.  A clearer characterization of the relationship between BPs and RDU varieties is also an interesting topic for our future work.
\section*{Acknowledgements}
The work is partly supported by the National Natural Science Foundation of China (Grant No.11271034),  the ANR-NSFC project EXACTA (ANR-09-BLAN-0371-01/60911130369)
and the project SYSKF1207 from State Key Laboratory of Computer Science, Institute of Software, the Chinese Academy of Sciences. We would like to thank Changbo  Chen and Yao Sun for providing a great deal of test-systems. Thanks also go to Rong Xiao for his suggestions. We would like to thank Hoon Hong and Dongming Wang for their advices on the original version of this paper.  Thanks also go to the reviewers for their valuable comments.

\end{document}